\newcommand*{\email}[1]{\href{mailto:#1}{\nolinkurl{#1}} }
\crefname{conjecture}{conjecture}{conjectures}
\Crefname{conjecture}{Conjecture}{Conjectures}
\DeclareMathOperator*{\argmax}{arg\,max}
\newtheorem{theorem}{Theorem}[section]
\newtheorem{lemma}[theorem]{Lemma}
\newtheorem{conjecture}[theorem]{Conjecture}
\begin{document}

\title{Separating Coverage and Submodular: \\ Maximization Subject to a Cardinality Constraint}
\author{Yuval Filmus\,\orcidlink{0000-0002-1739-0872}\,}
\affil{The Henry \& Marilyn Taub Faculty of Computer Science and Faculty of Mathematics \\ Technion --- Israel Institute of Technology \\ \email{yuvalfi@cs.technion.ac.il}}
\author{Roy Schwartz}
\affil{The Henry \& Marilyn Taub Faculty of Computer Science\\ Technion --- Israel Institute of Technology \\ \email{schwartz@cs.technion.ac.il}}
\author{Alexander V. Smal\,\orcidlink{0000-0002-8241-5503}\,}
\affil{JetBrains Research \\ \email{avsmal@gmail.com}}

\maketitle

\begin{abstract}
We consider two classic problems: maximum coverage and monotone submodular maximization subject to a cardinality constraint.
[Nemhauser--Wolsey--Fisher '78] proved that the greedy algorithm provides an approximation of $1-\nicefrac{1}{e}$ for both problems, and it is known that this guarantee is tight
([Nemhauser--Wolsey '78; Feige '98]).
Thus, one would naturally assume that everything is resolved when considering the approximation guarantees of these two problems, as both exhibit the same tight approximation and hardness.

In this work we show that this is not the case, and study both problems when the cardinality constraint is a constant fraction $c \in (0,1]$ of the ground set.
We prove that monotone submodular maximization subject to a cardinality constraint admits an approximation of $1-(1-c)^{\nicefrac{1}{c}}$; This approximation equals $1$ when $c=1$ and it gracefully degrades to $1-\nicefrac{1}{e}$ when $c$ approaches $0$.
Moreover, for every $c=\nicefrac{1}{s}$ (for any integer $s\in \mathbb{N}$) we present a matching hardness.

Surprisingly, for $c=\nicefrac{1}{2}$ we prove that Maximum Coverage admits an approximation of $0.7533$, thus separating the two problems.
To the best of our knowledge, this is the first known example of a well-studied maximization problem for which coverage and monotone submodular objectives exhibit a different best possible approximation.
\end{abstract}

\section{Introduction}\label{sec:intro}
In this work we consider two related classic maximization problems. 
First, in Maximum Coverage (MC) we are given a (weighted) collection of elements $E$ equipped with $w\colon E\rightarrow \mathbb{R}_+$, sets $ S_1,\ldots,S_n\subseteq E$, and a cardinality bound $k\in \mathbb{N}$.
The goal is to find a collection of at most $k$ sets $ X\subseteq [n]$ that maximizes the total weight of covered elements: $ \sum _{e\in \bigcup _{i\in X}S_i }w_e$.

MC is captured by the well-studied problem of maximizing a nonnegative monotone submodular function subject to a cardinality constraint. Given a universe $U$ of size $n$, a set function $f\colon 2^U\rightarrow \mathbb{R}_+$ is \emph{submodular} if it satisfies $ f(A)+f(B)\geq f(A\cup B)+f(A\cap B)$ $ \forall A,B\subseteq U$,\footnote{An equivalent definition: $ f(A\cup \{ e\})-f(A) \geq f(B\cup \{ e\}) - f(B)$, $ \forall A\subseteq B$ and $ e\notin B$.} and is monotone if $ f(A)\leq f(B)$ $ \forall A\subseteq B$.
In this problem, denoted by SM, we are given $U$, a monotone submodular function $f$ represented by a value oracle,\footnote{A value oracle is a black box that allows the algorithm to access $f$ by querying the oracle with some $S\subseteq U$ and receiving from the oracle $f(S)$.} and a cardinality bound $ k\in \mathbb{N}$.
The goal is to find $ S\subseteq U$ of size at most $k$ that maximizes $f(S)$.
Clearly, SM captures MC since coverage functions are submodular and monotone.

SM was studied as early as the late $70$'s, when in their celebrated work, Nemhauser, Wolsey, and Fisher \cite{NWF78} proved that the simple greedy algorithm provides an approximation of $ 1-\nicefrac{1}{e}$.
Moreover, Nemhauser and Wolsey \cite{NW78} proved that this guarantee is tight, i.e., any algorithm that achieves an approximation better than $ 1-\nicefrac{1}{e}$ is required to perform exponentially many value queries.
Later, Feige \cite{F98} proved that assuming $ P\neq \mathit{NP}$ the bound of $ 1-\nicefrac{1}{e}$ is also tight for the special case of MC.

Both MC and SM have numerous theoretical and practical applications, e.g., the spread of influence in networks \cite{KKT05,KKT15}, social graph analysis \cite{NTMZMS18}, dictionary selection \cite{DK08}, data summarization \cite{MKZK18}, and many more.
Thus, it is no surprise that these problems have been studied extensively in different settings, e.g., fast and implementable algorithms \cite{BV14,BFS17,MBKVK15}, non-monotone submodular objectives \cite{BF19,BF24,BFNS14,EN16}, randomized vs.\ deterministic algorithms \cite{BF18,BF24focs,BF25}, and online algorithms \cite{CHJKT18,BFS19}, to name a few.

One would naturally assume that with respect to the approximation of both these classic and well-studied problems everything is resolved, for two main reasons.
First, matching upper and lower bounds are known.
Second, there is no known difference (to the best of our knowledge) between the best possible approximation when considering maximization, with any constraint, of coverage and monotone submodular objectives.
Another classic example of the latter is a knapsack constraint \cite{KMN99,S04}.

In this work we prove that this is not the case.
We study the behavior of the best possible approximation for both MC and SM as a function of the cardinality bound $k$.
Specifically, we assume that $ k=cn$ for some absolute constant $0<c\leq 1$, and aim to prove both upper and lower bounds on the possible approximation as a function of $c$.
To the best of our knowledge, this behavior has never been previously studied until this work.
The aforementioned works imply that regardless of the value of $c$, the approximation is always at least $ 1-\nicefrac{1}{e}$ for both MC and SM.
Moreover, for both problems a uniform random solution provides an approximation of at least $c$ (see, e.g., \cite[Lemma 2.2]{FMV11}).

A surprising result of our work is that MC and SM admit different approximations when studied in this regime.
For example, for $ c=\nicefrac{1}{2}$ we prove that MC admits an approximation that is strictly better than the hardness of SM.
This separates maximization of a coverage function when compared to maximization of a general monotone submodular function.
We are not aware of any previously known natural example of this unexpected phenomenon in the context of combinatorial optimization.
The situation can be compared to the difference between the optimal approximation ratios of MAX-CUT and (symmetric or arbitrary) non-monotone submodular maximization.

\subsection{Our Results}\label{sec:results}

\paragraph{Algorithms for MC and SM.}
First, for every constant $ 0<c\leq 1$ we present a simple LP-based approximation algorithm for MC whose approximation ratio depends on $c$. 
\begin{theorem}\label{thrm:LP-alg}
For every constant $0<c\leq 1$ there exists a polynomial time algorithm for Maximum Coverage with $ k=cn$ that achieves an approximation of $\rho(c)$.
If $ c=\nicefrac{1}{s}$ for some integer $ s\in \mathbb{N}$, then $ \rho(c)=1-(1-c)^{\nicefrac{1}{c}}$.
Otherwise, $ \nicefrac{1}{(s+1)}<c< \nicefrac{1}{s}$ for some integer $ s\in \mathbb{N}$ and $\rho(c)=1-\sigma(\alpha^*,s)$, where:
\begin{enumerate}[(1)]
\item $ \sigma(\alpha,m)=(1-\alpha c-(1-\alpha)/m)^m$.
\item $ \alpha^* \in (0,1)$ is the unique solution of the equation $ \sigma(\alpha^*,s)=\sigma(\alpha^*,s+1)$.
\end{enumerate}
\end{theorem}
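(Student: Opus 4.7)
The plan is to combine the natural LP relaxation of Maximum Coverage with a rounding that mixes the LP solution with uniform random sampling, and then to optimize the mixing parameter. I would work with the standard LP, with $x_i\in[0,1]$ for each set and $y_e\in[0,1]$ for each element, subject to $y_e\le\sum_{i:\,e\in S_i}x_i$, $y_e\le 1$, and $\sum_i x_i\le k$, maximizing $\sum_e w_e y_e$. Let $(x^\ast,y^\ast)$ be an LP optimum. For a mixing parameter $\alpha\in[0,1]$ (to be chosen later), set $z_i=(1-\alpha)x_i^\ast+\alpha c$. Since $z_i\in[0,1]$ and $\sum_i z_i=(1-\alpha)k+\alpha cn=k$, the vector $z$ lies in the cardinality-$k$ polytope, so a standard dependent rounding scheme (pipage rounding or randomized swap rounding) produces a random $X\subseteq[n]$ with $|X|\le k$, marginals $\Pr[i\in X]=z_i$, and the negative-correlation bound $\Pr[T\cap X=\emptyset]\le\prod_{i\in T}(1-z_i)$ for every $T\subseteq[n]$.

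Next I would bound the non-coverage of each element. For $e$ contained in $d_e$ sets, by AM--GM and the LP feasibility inequality $\sum_{i\ni e} x_i^\ast\ge y_e^\ast$,
\[
\Pr[e\text{ uncovered}]\;\le\;\prod_{i\ni e}(1-z_i)\;\le\;\Bigl(1-\alpha c-\tfrac{(1-\alpha)y_e^\ast}{d_e}\Bigr)^{d_e}.
\]
As the right-hand side is convex in $y_e^\ast\in[0,1]$ and at most $1$ at $y_e^\ast=0$, the chord inequality (applied to its concave complement) gives $\Pr[e\text{ covered}]\ge\bigl(1-\sigma(\alpha,d_e)\bigr)\,y_e^\ast$, where $\sigma(\alpha,m)=(1-\alpha c-(1-\alpha)/m)^m$. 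Summing $w_e$ times this lower bound shows that the expected value of $X$ is at least $\bigl(1-\max_{m\ge1}\sigma(\alpha,m)\bigr)\cdot\mathrm{OPT}_{\mathrm{LP}}\ge\bigl(1-\max_{m\ge1}\sigma(\alpha,m)\bigr)\cdot\mathrm{OPT}$.

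The remaining and hardest step is to choose $\alpha$ so as to minimize $\max_{m\in\mathbb{Z}_{\ge1}}\sigma(\alpha,m)$ and to match the formula for $\rho(c)$. I would first verify by differentiating $\log\sigma$ in $m$ that, for each fixed $\alpha\in(0,1)$, the real function $m\mapsto\sigma(\alpha,m)$ is unimodal, with a continuous maximizer $m^\ast(\alpha)$ depending continuously and monotonically on $\alpha$, ranging from $+\infty$ at $\alpha=0$ down to $1$ at $\alpha=1$. When $c=1/s$ a direct substitution gives $\sigma(\alpha,s)=(1-c)^{1/c}$ for every $\alpha$, so choosing the unique $\alpha$ with $m^\ast(\alpha)=s$ makes this the overall maximum and yields the ratio $1-(1-c)^{1/c}$. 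For $c\in(\tfrac{1}{s+1},\tfrac{1}{s})$ the continuous maximizer is non-integer, so the integer maximum is attained at $s$ or $s+1$; since $\partial_\alpha\sigma(\alpha,m)$ has sign $\mathrm{sgn}(1/m-c)$, the function $\sigma(\alpha,s)$ is increasing in $\alpha$ while $\sigma(\alpha,s+1)$ is decreasing, giving a unique balancing point $\alpha^\ast\in(0,1)$ with $\sigma(\alpha^\ast,s)=\sigma(\alpha^\ast,s+1)$ that minimizes the integer maximum. The main obstacle I anticipate is precisely this unimodality/dominance step, and in particular verifying that no integer $m\notin\{s,s+1\}$ dominates the balancing value; this reduces to a careful monotonicity analysis of $m\log(1-\alpha c-(1-\alpha)/m)$ in $m$, using the explicit critical-point equation produced by the derivative.
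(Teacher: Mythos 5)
Your proposal is correct and follows essentially the same route as the paper: the same LP, the same mixing $z_i=\alpha c+(1-\alpha)y_i$, the same AM--GM plus concavity per-element bound giving the factor $1-\sigma(\alpha,m)$, and the same balancing of $\sigma(\alpha,s)$ and $\sigma(\alpha,s+1)$ (your use of negative-correlation swap rounding instead of pipage via the multilinear extension, and an intermediate-value argument for $m^*(\alpha)=s$ when $c=\nicefrac1s$ instead of the explicit $\alpha^*=1-(s-1)\ln\frac{s}{s-1}$, are immaterial variations). The ``dominance'' obstacle you flag is closed exactly by the unimodality you already plan to verify: in the paper it is the one-line computation $\partial^2\log\sigma(\alpha,m)/\partial m^2=-\frac{(1-\alpha)^2/m}{[(1-\alpha c)m-(1-\alpha)]^2}<0$, i.e.\ log-concavity in $m$, which places the real maximizer in $[s,s+1]$ at the balancing point and hence rules out every integer $m\notin\{s,s+1\}$.
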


\begin{figure}
    \centering
    \subfloat
    [Entire curve]
    {
         \includegraphics[width=0.45\textwidth]{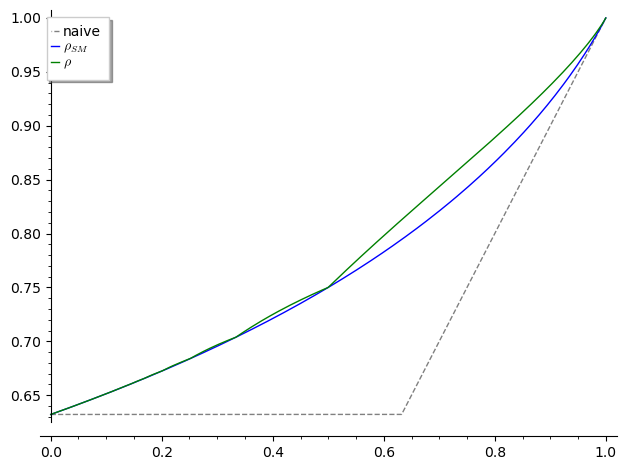}    
    }
    \hfill
    \subfloat
    [Zoom-in on {$c \in [1/2, 1]$}]
    {
         \includegraphics[width=0.45\textwidth]{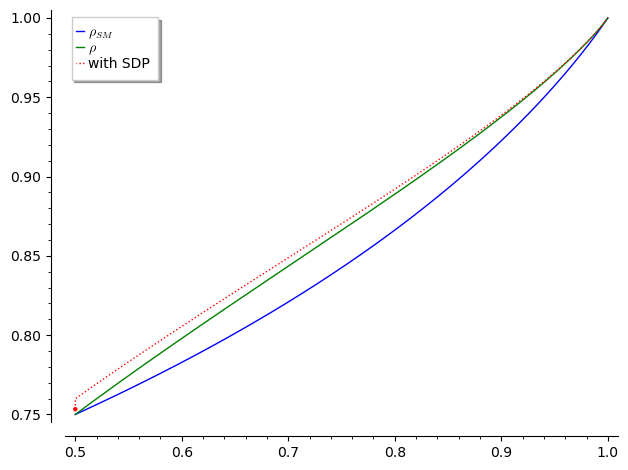}    
    }
    \caption{
        $1-(1-c)^{\nicefrac{1}{c}}$ (blue) is approximation for SM, $\rho(c)$ (green) is hardness for SM and approximation for MC, $\max\{1-\nicefrac1e, c\}$ (dashed gray) is naive algorithm, and red point (proven) extended by dotted red line (numerical under a generalization of Austrin's simplicity conjecture~\cite{A07}, see \Cref{app:SDP}) is approximation for MC. Both red dot and dotted red line separate MC and SM.
    }
    \label{fig:curves}
\end{figure}

\Cref{fig:curves} depicts $ \rho(c)$.
Some notable remarks regarding $ \rho(c)$:
\begin{enumerate}[(1)]
\item $ \rho(c)$ is an increasing function in $c$.
\item If $ c=\nicefrac{1}{s}$ for some integer $s\in \mathbb{N}$ then $ \rho(c)=1-(1-c)^{\nicefrac{1}{c}}$.
\item  $\rho(1)=1$, i.e., if $c=1$ our algorithm gives an optimal solution.
\item $ \lim _{c\rightarrow 0}\rho(c)=(1-\nicefrac{1}{e})$.
\item $ \rho(c) > \max \{ c, 1-\nicefrac{1}{e}\}$ for all $0 < c < 1$.
\end{enumerate}
An immediate consequence of the above is that our approximation is strictly better than the naive algorithm that takes the best of the two previously known algorithms. It is worth noting that for every $c$ we present a matching integrality gap of $\rho(c)$ for the LP that is used by our algorithm, thus proving the tightness of our analysis.

The algorithm underlying \Cref{thrm:LP-alg} relies on the explicit description of the MC instance. It can be modified to work even given only value oracle access to the instance.

\begin{theorem}\label{thrm:LP-alg-oracle}
For every constant $0<c\leq 1$ there exists a polynomial time algorithm for Maximum Coverage with $ k=cn$ that achieves an approximation of $\rho(c)$, where the input is provided by a value oracle.
\end{theorem}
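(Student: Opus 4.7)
The plan is to translate the LP-based algorithm of \Cref{thrm:LP-alg} into the value-oracle model by substituting a sample-based continuous relaxation for the explicit LP. The multilinear extension $F(x) = \mathbb{E}_{R \sim x}[f(R)]$ of the coverage function $f$ plays the role of the LP objective: for any $x \in [0,1]^n$, both $F(x)$ and its partial derivatives can be estimated to additive error $\varepsilon$ using $\mathrm{poly}(n, 1/\varepsilon)$ independent random draws $R \sim x$ (i.e., include each index $i$ independently with probability $x_i$) and the corresponding value-oracle queries, via a Hoeffding bound.

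Given these estimates, I would run a continuous-greedy / Frank--Wolfe procedure to compute a fractional $\tilde x$ in the cardinality polytope $P_k = \{x \in [0,1]^n : \sum_i x_i \le k\}$ that approximately maximizes $F$, and then round $\tilde x$ to an integer set $S \subseteq [n]$ of size at most $k$ using pipage rounding or swap rounding. Both the continuous optimization and the rounding depend only on $\tilde x$ and on oracle values at sampled subsets; no explicit element-set incidence data is needed, so the overall procedure runs in polynomial time in the value-oracle model. Standard amplification via $\mathrm{poly}(n)$ independent repetitions of the randomized procedure, followed by oracle-based selection of the best observed output, converts the expected-value guarantee into a high-probability deterministic one.

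The main obstacle is matching the $\rho(c)$ approximation of \Cref{thrm:LP-alg} rather than the folklore $1-\nicefrac{1}{e}$ bound that continuous greedy yields for general monotone submodular functions. The required improvement must come from the large-cardinality regime $k = cn$: one expects both (i) $\max_{x \in P_k} F(x) \ge \rho(c) \cdot \mathrm{OPT}$ for coverage functions, and (ii) pipage or swap rounding to preserve this ratio on a per-element basis. Fact (ii) is the standard oblivious-rounding bound for coverage, while fact (i) reduces to a per-element inequality of the form $1 - \prod_{i \in T_e}(1 - x_i) \ge \rho(c) \cdot \min(1, \sum_{i \in T_e} x_i)$ along the continuous-greedy trajectory, where $T_e$ is the set of indices covering element $e$. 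Establishing this inequality---in particular, ruling out the ``diffuse'' fractional solutions on which it might fail, and showing that continuous greedy's iterates never fall into that regime---is the crux of the proof, and should mirror the tight per-type analysis underlying \Cref{thrm:LP-alg}.
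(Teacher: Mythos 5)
There is a genuine gap, and it sits exactly at the point you flag as ``the crux.'' Your pipeline---estimate $F$ by sampling, run continuous greedy/Frank--Wolfe over $P_k$, then pipage/swap round---never uses the coverage structure in any way the oracle can certify, so the only guarantee it can inherit is the one of \Cref{thrm:contgreedy-alg}, namely $1-(1-c)^{\nicefrac1c}$, which is strictly smaller than $\rho(c)$ whenever $\nicefrac1c\notin\mathbb{N}$. The per-element inequality you hope for, $1-\prod_{i\in T_e}(1-x_i)\ge \rho(c)\min\{1,\sum_{i\in T_e}x_i\}$, is simply false at diffuse points: with $|T_e|=m$ and $x_i=\nicefrac1m$ the left side is $1-(1-\nicefrac1m)^m\to 1-\nicefrac1e<\rho(c)$. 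In \Cref{thrm:LP-alg} this is repaired not by avoiding diffuse points but by rounding a \emph{mixed} point $z_i=\alpha c+(1-\alpha)y_i$ and comparing against the LP value $\sum_e w_e x_e$ (which upper-bounds OPT via the explicit constraints $x_e\le\sum_{i\in T_e}y_i$); your proposal has neither the mixing step nor an oracle-accessible surrogate for the LP upper bound. Moreover, the hope that continuous greedy's iterates ``never fall into the diffuse regime'' cannot be realized: on the symmetric coverage instances of \Cref{sec:integrality-gap} the gradient is symmetric and the trajectory is exactly the diffuse symmetric point, so no such structural claim about the iterates is available.

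The paper takes a different and much more direct route: with $O(n^M)$ value queries one recovers, by M\"obius inversion, the exact weight $w_I=\sum_{J\subseteq I}(-1)^{|I\setminus J|}f(J)$ of elements covered by precisely the sets in $I$, for every $|I|\le M$, plus the residual weight $w'$ of elements covered by more than $M$ sets. One then writes an explicit LP for the surrogate instance in which the residual weight is treated as covered by all sets, and reruns the analysis of \Cref{lem:AlgMultilinear} on the mixed point $z_i=\alpha c+(1-\alpha)y_i$; the only new ingredient is that elements with $|I|>M$ are covered with probability at least $1-(1-\alpha c)^{M+1}\ge\rho(c)$ for a suitable constant $M=M(c)$, using $\alpha>0$. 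If you want to salvage your sampling-based approach, you would need to explain how the oracle algorithm reconstructs (or certifies) an LP-strength upper bound on OPT and where the $\alpha c$ mixing enters; as written, the argument cannot beat $1-(1-c)^{\nicefrac1c}$.
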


Second, for every constant $ 0<c\leq 1$ we present a simple approximation algorithm for SM whose approximation ratio depends on $c$. The algorithm is implicit in~\cite{FNS11}, and we include it here for completeness.
\begin{theorem}\label{thrm:contgreedy-alg}
For every constant $ 0<c\leq 1$ there exists a polynomial time algorithm for monotone submodular maximization subject to a cardinality constraint $ k=cn$ that achieves an approximation of $ 1-(1-c)^{\nicefrac{1}{c}}$.
\end{theorem}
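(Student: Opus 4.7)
\emph{Proof proposal.} My plan is to run the continuous greedy algorithm on the multilinear extension of $f$ followed by pipage (or swap) rounding, relying on a sharpened analysis tailored to the regime $k=cn$. Let $F\colon[0,1]^n\to\mathbb{R}_+$ denote the multilinear extension of $f$, $F(x)=\mathbb{E}[f(R(x))]$ with $R(x)$ including each $i\in[n]$ independently with probability $x_i$; and let $P_k=\{x\in[0,1]^n : \sum_i x_i \le k\}$ be the cardinality polytope.

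The algorithm starts at $x(0)=0$ and follows the flow $\dot x(t)=v(t)$, where $v(t)\in P_k$ maximizes $\langle v,\nabla F(x(t))\rangle$; for cardinality this is the indicator of the top-$k$ coordinates of $\nabla F(x(t))$. Terminate at $T=1$, producing $x(1)\in P_k$, and apply pipage or swap rounding to $x(1)$ to obtain a random integral set $S\subseteq[n]$ with $|S|\le k$ and $\mathbb{E}[f(S)]\ge F(x(1))$; this rounding step is standard.

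The technical crux is the inequality $F(x(1))\ge (1-(1-c)^{\nicefrac{1}{c}})\,f(\mathrm{OPT})$. The standard analysis gives $\dot F(x(t))\ge f(\mathrm{OPT})-F(x(t))$ via the chain $\langle v(t),\nabla F\rangle\ge\langle\chi_{\mathrm{OPT}},\nabla F\rangle\ge F(x(t)\vee\chi_{\mathrm{OPT}})-F(x(t))\ge f(\mathrm{OPT})-F(x(t))$, which integrates only to $1-\nicefrac{1}{e}$. To improve, I would add a second lower bound exploiting $k=cn$: since the top-$k$ entries of $\nabla F(x(t))$ contain at least a $c$-fraction of $\sum_i\partial_i F(x(t))\ge F(\mathbf{1})-F(x(t))\ge f(\mathrm{OPT})-F(x(t))$, the greedy derivative also satisfies $\dot F(x(t))\ge c\,(f(\mathrm{OPT})-F(x(t)))$ by a pure density argument. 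Blending these two lower bounds along the trajectory, one obtains a rate inequality $\dot F(x(t))\ge \gamma(t)(f(\mathrm{OPT})-F(x(t)))$ whose integral over $[0,1]$ equals $-\ln(1-c)/c$, and Gr\"onwall-style integration then yields the desired bound.

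\emph{Main obstacle.} The delicate point is the blending step: neither individual lower bound on its own reaches the target (the $\mathrm{OPT}$-comparison alone gives average rate $1$, the density comparison alone gives average rate $c$), while the target $1-(1-c)^{\nicefrac{1}{c}}$ corresponds to an average rate of $-\ln(1-c)/c>1$. Making the combination quantitatively tight is the substance of the argument implicit in \cite{FNS11} and is where I would expect the heaviest calculation; the rounding step, by contrast, does not present any difficulty.
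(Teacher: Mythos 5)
There is a genuine gap, and it sits exactly where you flag your ``main obstacle'': the blending step is not merely a heavy calculation, it cannot work as stated. Your two lower bounds on the drift are $\dot F \geq f(\mathrm{OPT}) - F$ (rate $1$) and $\dot F \geq c\,(f(\mathrm{OPT}) - F)$ (rate $c \leq 1$). Any rate function $\gamma(t)$ you can derive from these alone is, in the worst case, no larger than $\max\{1,c\} = 1$ pointwise, so its integral over the horizon $[0,1]$ is at most $1$ and the resulting guarantee is at most $1 - \nicefrac1e$. The target $1-(1-c)^{\nicefrac1c}$ corresponds to an average rate of $-\ln(1-c)/c > 1$, which is strictly out of reach of any convex or time-varying combination of the two bounds; no amount of calculation closes this. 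The constraint $T=1$, forced on you by running the \emph{plain} continuous greedy (whose iterate $x(1)$ is an average of points of $P_k$ and is only feasible up to time $1$), is precisely what caps you at $1-\nicefrac1e$.

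The paper's proof gets the improvement from a different mechanism: it runs the \emph{measured} continuous greedy of \cite{FNS11}, whose update is $\partial y_i/\partial t = (1-y_i(t))\,x_i(t)$ with $x(t) = \argmax\{x \cdot \nabla F(y(t)) : x \in \mathcal{P}\}$. The rate analysis is the standard one, $F(y(t)) \geq (1-e^{-t})f(S^*)$ for \emph{all} $t \geq 0$; the gain comes entirely from the stopping time. The damping factor $(1-y_i)$ gives $y_i(t) \leq 1 - e^{-\int_0^t x_i(s)\,ds}$, and by concavity of $1-e^{-t}$ together with $\sum_i x_i(s) \leq cn$, one gets $\sum_i y_i(t) \leq n(1-e^{-ct})$. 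Hence the process stays inside the cardinality polytope up to time $T_c = \ln(\nicefrac1{(1-c)})/c > 1$, and evaluating the standard guarantee at $T_c$ yields $F(y(T_c)) \geq (1-(1-c)^{\nicefrac1c})f(S^*)$, after which pipage rounding finishes as you describe. So to repair your argument you should replace the fixed-horizon plain continuous greedy with the measured variant and argue feasibility at a later stopping time, rather than seek a stronger differential inequality on $[0,1]$.
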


\Cref{fig:curves} depicts the curve $ 1-(1-c)^{\nicefrac{1}{c}}$.
Some notable remarks regarding this approximation guarantee:
\begin{enumerate}[(1)]
\item For every $ 0<c< 1$, $ 1-(1-c)^{\nicefrac{1}{c}}> \max \{ c,1-\nicefrac{1}{e}\} $, i.e., our algorithm is strictly better than the naive algorithm that takes the best of the two previously known algorithms.
\item If $ c=\nicefrac{1}{s}$ for some integer $s\in \mathbb{N}$ then $1-(1-c)^{\nicefrac{1}{c}}=\rho (c) $, thus \Cref{thrm:LP-alg,thrm:contgreedy-alg} provide the same guarantee for these values of $c$.
\item $ \rho(c)> 1-(1-c)^{\nicefrac{1}{c}}$ for all $c$ such that there does not exist an integer $s\in \mathbb{N}$ for which $ c=\nicefrac{1}{s}$, i.e., for these values of $c$, \Cref{thrm:LP-alg} provides an approximation strictly better than that of \Cref{thrm:contgreedy-alg}.
\end{enumerate}
One should recall that \Cref{thrm:contgreedy-alg} is for SM, whereas \Cref{thrm:LP-alg} is for MC, which is a special case of SM.

\paragraph{Value Oracle Hardness for SM.}
Third, we present a hardness result for SM that equals $\rho(c)$ and (almost) matches the guarantee of \Cref{thrm:contgreedy-alg} for every $c$. The case $c=\nicefrac1s$ of the theorem essentially appears already in~\cite{V13}.

\begin{theorem}\label{thrm:SM-hardness}
For every rational constant $ 0<c\leq 1$ and every $ \varepsilon >0$, any (possibly randomized) algorithm for monotone submodular maximization subject to a cardinality constraint $ k=cn$ that achieves an approximation of $(1+\varepsilon)\rho(c)$ requires exponentially many value queries.    
\end{theorem}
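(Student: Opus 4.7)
The plan is to apply Vondrák's symmetry gap framework~\cite{V13}. Given a (small) monotone submodular instance $(V_0,f_0,P_0)$ together with a symmetry group $G$ acting on $V_0$, one defines the symmetry gap
\[
\bar\gamma \;=\; \frac{\max_{x\in P_0} \bar F_0(x)}{\max_{x\in P_0} F_0(x)},
\]
where $F_0$ is the multilinear extension of $f_0$ and $\bar F_0$ is its $G$-average. The main theorem of~\cite{V13} then produces, from such a small instance, a blown-up monotone submodular instance of arbitrarily large size preserving the cardinality constraint, on which no value-oracle algorithm achieves approximation exceeding $(1+o(1))\bar\gamma$ with polynomially many queries. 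Our task thus reduces to exhibiting a small instance whose symmetry gap is exactly $\rho(c)$.

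For the case $c=1/s$, which the theorem statement notes essentially appears in~\cite{V13}, I would partition $V_0$ into $s$ equal blocks $B_1,\dots,B_s$, take $G$ to be the group of block permutations, and let $f_0$ be of coverage type so that one distinguished block forms the unique integer optimum of value $1$ under the constraint $|S|\le |V_0|/s$. Concavity of the coverage function then forces $\max\bar F_0$ to be attained at the uniform spread across blocks, giving $\bar\gamma = 1-(1-1/s)^s = \rho(c)$.

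For general rational $c\in(1/(s+1),1/s)$, the new ingredient is a direct-sum construction combining two such small instances. I would partition $V_0$ into disjoint parts of relative sizes $\alpha^*$ and $1-\alpha^*$, build the $s$-block instance above on the first part and the $(s+1)$-block instance on the second, and rescale so that each has integer optimum $1$. The combined $f_0$ is the rescaled sum of the two, and its symmetry group is the product of the two block-permutation groups. The fractional symmetric value achievable by any algorithm then depends only on its budget allocation $\beta\in[0,1]$ between the two subinstances; the defining equation $\sigma(\alpha^*,s)=\sigma(\alpha^*,s+1)$ is precisely the saddle-point condition that makes both subinstances equally binding, so no allocation exceeds $1-\sigma(\alpha^*,s)=\rho(c)$. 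The hypothesis that $c$ is rational lets us choose $|V_0|$ divisible by the relevant denominators so that all block sizes are integers.

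The main obstacle will be the algebraic check in the general-$c$ case: writing out $\max\bar F_0$ as a function of the allocation $\beta$, matching the resulting terms with $\sigma(\alpha,m)=(1-\alpha c-(1-\alpha)/m)^m$, and verifying that the minimax over the strategy parameter coincides with the defining condition $\sigma(\alpha^*,s)=\sigma(\alpha^*,s+1)$. A secondary technical point is that the direct sum and the blowup must preserve monotone submodularity and the cardinality constraint in the form required by~\cite{V13}; both should follow from the corresponding properties of the two block summands but need to be recorded explicitly.
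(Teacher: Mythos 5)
Your high-level route is the same as the paper's: invoke Vondr\'ak's symmetry gap theorem, use the coverage gadget of $s$ sets covering one element for $c=\nicefrac1s$, and for $c\in(\nicefrac1{(s+1)},\nicefrac1s)$ take a weighted direct sum of the $s$-gadget and the $(s+1)$-gadget instances. The $c=\nicefrac1s$ case matches the paper (which uses $s$ sets covering a single unit-weight element; your ``unique integer optimum block'' phrasing is incompatible with $G$-invariance, but that is cosmetic).

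The genuine gap is in the parameters of the combined instance for general $c$, and it is not just an unverified computation --- with your choices the computation fails. In the paper, the numbers of gadgets are \emph{forced} by budget saturation: with $c=N/D$ one takes $a=(s+1-\nicefrac1c)N$ gadgets of type $A$ and $b=(\nicefrac1c-s)N$ of type $B$, so that $c|U|=a+b$ and the integral optimum equals the total weight $1$ (one set per element); the remaining freedom is the \emph{weight} split $p$ between the two types, which is then chosen by the stationarity condition $p\,b/(1-x^*)=(1-p)\,a/(1-y^*)$ so that the concave symmetric program is maximized exactly at the point $(x^*,y^*)$ parametrized by $\alpha^*$, giving $\overline{\mathrm{OPT}}=1-\sigma(\alpha^*,s)=\rho(c)$. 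You instead fix the \emph{ground-set} split to $(\alpha^*,1-\alpha^*)$ and freeze the weights by normalizing each sub-instance to optimum $1$; the equation $\sigma(\alpha^*,s)=\sigma(\alpha^*,s+1)$ only guarantees that the two gadget types attain equal values at the $\alpha^*$-point, not that the symmetric maximum sits there, and with equal weights it does not. Concretely, for $c=0.6$ (so $s=1$, $\alpha^*\approx 0.505$, $\rho(c)\approx 0.798$), your instance has about $0.505n_0$ singleton gadgets and $0.247n_0$ pair gadgets, each side of total weight $1$; the integral optimum under budget $0.6n_0$ covers all pair gadgets and spends the remainder on singletons, giving $\approx 1.698$, while the best symmetric fractional solution gives $\approx 1.448$, a symmetry gap of $\approx 0.853 > \rho(0.6)$. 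The resulting value-oracle hardness threshold would therefore be strictly weaker than $\rho(c)$, so the theorem as stated is not proved. The fix is exactly the paper's allocation of degrees of freedom: sizes determined by $c|U|=a+b$ (so $\mathrm{OPT}=1$), $\alpha^*$ entering only through the location of the optimal symmetric point, and the weight parameter $p$ tuned by the first-order condition to place the maximum there.
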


It is worth nothing that \Cref{thrm:SM-hardness,thrm:contgreedy-alg} resolve the approximability of SM when $ c=\nicefrac{1}{s}$ for $ s\in \mathbb{N}$ as both the algorithm and the hardness match (recall that in this case $ \rho(c)=1-(1-c)^{\nicefrac{1}{c}}$). 

We do not prove similar hardness results for MC, as we actually prove that MC admits a better approximation.
Moreover, the results of Austrin and Stankovi\'c~\cite{AS19} imply that MC is APX-hard for all $c \in (0,1)$, even in the special case of Max $k$-Vertex-Cover, in which each element is covered by exactly two sets.

\paragraph{Separating Coverage and Submodular Maximization.}
Perhaps the most surprising result is that we are able to \emph{separate} MC and SM, even when in both cases the instance is given via a value oracle, as the following theorem summarizes.

\begin{theorem}\label{thrm:separating}
There exists a polynomial time algorithm for Maximum Coverage with $ k=\nicefrac{n}{2}$ that achieves an approximation of at least $0.7533-o(1)$.
The algorithm only requires value oracle access to the instance.
\end{theorem}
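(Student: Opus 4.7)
The plan is to improve on the $\rho(1/2) = 3/4$ ratio of \Cref{thrm:LP-alg} at $c = 1/2$ by combining it with an SDP-based algorithm that outperforms $3/4$ precisely on the instances where the LP is tight. First I would observe that the analysis behind \Cref{thrm:LP-alg} is tight only when most of the weight is concentrated on elements covered (in the LP) by about two sets each with fractional value around $1/2$: elements of LP-depth $d \geq 3$ contribute at least $1 - (1/2)^d \geq 7/8 > 3/4$ under independent rounding already, and elements of depth $1$ can be cheaply absorbed into the budget. Quantifying this dichotomy reduces the genuinely hard case to a weighted Max $k$-Vertex Cover instance with $k = n/2$, where sets play the role of vertices and depth-two elements play the role of edges.

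For this Max $k$-Vertex Cover subinstance I would formulate the standard SDP relaxation in the spirit of Goemans--Williamson and Halperin--Zwick: assign a unit vector $v_i$ to each set together with a pivot vector $v_0$, encode the cardinality constraint via $\sum_i (1 + v_0 \cdot v_i)/2 = n/2$, and enforce the natural ``triangle'' inequalities on triples $(v_0, v_i, v_j)$. Rounding proceeds by drawing a random hyperplane and selecting the $n/2$ sets with largest projection onto an appropriately tilted direction, adapted from Austrin-type analyses of $k$-vertex-cover SDPs. The expected fraction of edges covered is an explicit integral in the pairwise angles $\arccos(v_i \cdot v_j)$ and $\arccos(v_0 \cdot v_i)$; numerical optimization over admissible vector configurations yields the claimed constant $0.7533$.

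Running both the LP-based algorithm from \Cref{thrm:LP-alg} and this SDP-based algorithm and returning the better of the two outputs gives the required guarantee: either the LP rounding already beats $0.7533$ (because the instance has sufficient LP-depth $\geq 3$ mass), or the instance is essentially depth-two and the SDP achieves $\geq 0.7533 - o(1)$. The $o(1)$ term absorbs the slack from treating ``almost depth-two'' instances and the numerical precision of the SDP solver. The value-oracle version is handled exactly as in \Cref{thrm:LP-alg-oracle}: the inner products and objective coefficients needed to set up the SDP can be estimated from polynomially many oracle queries, since only aggregate covering statistics (not the explicit set/element incidences) enter the analysis.

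The main obstacle will be the SDP rounding analysis in the second paragraph: establishing the precise numerical ratio $0.7533$ requires both a careful choice of auxiliary SDP constraints to rule out the worst vector configurations, and a delicate expectation computation over the rounding randomness. This step is where coverage's extra structure (as opposed to general monotone submodularity) is essential, and it is also the step that distinguishes the proven red point from the conjectural dotted extension in \Cref{fig:curves}: for SM no analogous SDP relaxation is available through value oracles, which is exactly what permits the separation from \Cref{thrm:SM-hardness}.
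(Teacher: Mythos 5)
Your high-level architecture matches the paper's: split the elements by covering multiplicity, note that the LP rounding of \Cref{thrm:LP-alg} is tight only on doubly-covered elements, treat those as a Max $k$-Vertex-Cover instance handled by an SDP, and combine. However, there are two genuine gaps. First, you misattribute the constant $0.7533$: it does not come from a hyperplane-rounding analysis of the Max $k$-Vertex-Cover SDP that you would carry out yourself. The paper simply invokes the known $0.9401$-approximation of Austrin--Benabbas--Georgiou for Max $k$-Vertex-Cover at $k=\nicefrac n2$ (\Cref{thrm:MaxkVC}, whose analysis rests on the simplicity conjecture proved by Brakensiek--Huang--Zwick), and $0.7533$ emerges only from balancing three numbers: $\rho_{\text{VC}}\geq 0.9401$ on $E_2$, $\rho_{\neq 2}=1-\sigma(\alpha^*,3)\approx 0.7668$ on $E_{\neq 2}$, and $\rho(\nicefrac12)=\nicefrac34$, via $\rho_{\text{VC}}\rho_{\neq2}/(\rho_{\text{VC}}+\rho_{\neq2}-\nicefrac34)\approx 0.7533$. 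Relatedly, your numeric claim that depth-$d\geq 3$ elements are covered with probability at least $1-(\nicefrac12)^d\geq \nicefrac78$ is wrong for the rounding actually used: with $\alpha^*=1-\ln 2$ the depth-$3$ guarantee is $1-\sigma(\alpha^*,3)\approx 0.7668$, and this smaller number is what enters the final balance. Re-deriving an SDP guarantee from scratch, as your second paragraph proposes, is a far heavier task than the theorem requires and would not by itself produce the stated constant.

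Second, the combination step ``run both and take the better'' with the dichotomy ``either the LP beats $0.7533$ or the instance is essentially depth-two'' does not work as stated. In the mixed regime where $\text{OPT}_2$ is a constant fraction of $\text{OPT}$, the LP optimum's weight split between $E_2$ and $E_{\neq 2}$ need not resemble the integral optimum's split: the LP may load essentially all of its value onto $E_2$ (where the rounding only guarantees $\nicefrac34$), while the SDP's benchmark, the optimum of the $E_2$-only instance, may be as small as roughly $\nicefrac34$ of that LP mass because the LP's integrality gap at $c=\nicefrac12$ is attained on purely depth-two instances. In that configuration the naive maximum of the two outputs is not shown to exceed $\nicefrac34\cdot\text{OPT}$. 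The paper closes exactly this hole by guessing $\text{OPT}_2$ up to a factor $1+\varepsilon$ (over $O(\log(\nicefrac1\varepsilon)/\varepsilon)$ intervals) and adding the constraint $\sum_{e\in E_2}w_e x_e\leq O$ to the LP, which keeps the LP value at least $\text{OPT}$ while forcing its $E_2$ mass down to (essentially) $\text{OPT}_2$; only then can the two guarantees $\rho_{\text{VC}}\,\text{OPT}_2$ and $\rho_{\neq2}\,\text{OPT}_{\neq2}+\rho(\nicefrac12)\,\text{OPT}_2$ be balanced by a convex combination against the common benchmark $\text{OPT}=\text{OPT}_2+\text{OPT}_{\neq2}$. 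Your value-oracle remark is in the right spirit, though the paper does this exactly rather than by estimation: $O(n^2)$ queries and inclusion--exclusion recover the weight covered by each pair of sets, yielding the $E_2$ instance explicitly and oracle access to the rest, after which \Cref{thrm:LP-alg-oracle} applies.
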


It is important to note that the above theorem implies that MC, when $c=\nicefrac{1}{2}$, has an approximation that is strictly better than $\rho(\nicefrac{1}{2})=\nicefrac{3}{4}$.
Recall that the latter is the hardness for SM, when $ c=\nicefrac{1}{2}$, as \Cref{thrm:SM-hardness} states.
Refer to \Cref{fig:curves} for this result.

Moreover, assuming a generalization of the simplicity conjecture of Austrin~\cite{A07}, this separation extends (numerically) to all constants $ \nicefrac{1}{2}< c< 1$, showing that MC with $ k=cn$ admits an approximation that is \emph{strictly} better than $\rho(c)$ by an additive absolute constant.

\subsection{Our Techniques}\label{sec:techniques}
A main ingredient in both algorithms for MC, i.e., \Cref{thrm:LP-alg,thrm:separating}, is that we round fractional solutions that are \emph{not} optimal with respect to our relaxations.
We average a fractional optimal solution, LP in the case of \Cref{thrm:LP-alg} and both LP and SDP in the case of \Cref{thrm:separating}, with a fractional solution that corresponds to a uniform random solution.
The mixing probability needs to be carefully chosen, as it depends on both $c$ and the rounding algorithm we use for the LP or SDP. The SDP is used to handle the elements in the instance which are covered exactly twice. These elements form an instance of Max $k$-Vertex-Cover, a problem for which Austrin, Benabbas and Georgiou~\cite{ABG16} gave an SDP-based approximation algorithm.

Moreover, in order to obtain (numerically) better algorithms for MC when $ \nicefrac{1}{2}<c<1$ (conditional on a generalization of Austrin's simplicity conjecture), our algorithm generalizes Austrin's algorithm for the case $c = \nicefrac12$ by handling arbitrary $c$ and by allowing elements to be covered only once.

Our results for SM are based on the measured continuous greedy algorithm \cite{CCPV11,FNS11}.
This algorithm simulates a continuous process.
The longer the process runs, the higher the value of the output is.
However, if the process runs for too long, we might end up with an infeasible solution.
We exploit the simple structure of the cardinality constraint to show that there exists a good stopping time for this process.

Our hardness results for SM are based on the symmetry gap machinery of Vondr{\'{a}}k \cite{V13}, in which a fixed instance with a gap between symmetric and arbitrary solutions is lifted to value oracle hardness.
This lifting process maintains the ratio $k/n$.
Curiously, the instances used to prove hardness for SM are identical to those that are used to show a tight integrality gap for the LP of \Cref{thrm:LP-alg}.

\subsection{Related Work}\label{sec:related-work}

While we are unaware of previous work that studies either MC or SM in the regime $k = cn$, Buchbinder et al.~\cite{BFNS14} study \emph{non-monotone} submodular maximization subject to a cardinality constraint $k \leq cn$ or $k = cn$ (in contrast to the monotone setting, in the non-monotone setting the two constraints are different). In both cases, their algorithm gives an approximation ratio of $(1 + \nicefrac{1}{2\sqrt{c(1-c)}})^{-1} - o(1)$ for all $c \leq \nicefrac12$, and $\nicefrac12 - o(1)$ for all $c > \nicefrac12$.

Another related work in the same vein due to Li et al.~\cite{LFKK22} studies the problem of maximizing a monotone linear function subject to the constraint $k \leq cn$, when the linear function is given as a value oracle. The paper, which focuses on algorithms making very few queries, shows that any algorithm whose approximation ratio beats $c$ must make $\Omega(\nicefrac{n}{\log n})$ value oracle queries.

Separations between maximum coverage and monotone submodular functions do appear in other settings. One example is mechanism design: Dughmi, Roughgarden and Yan~\cite{DRY11} give a $1-1/e$-approximation mechanism for welfare maximization when the valuations of the bidders are matroid rank sums, a class of functions thatincludes coverage functions. In contrast, Dughmi and Vondr\'ak~\cite{DV11} and Dobzinski and Vondr\'ak~\cite{DV12} rule out any constant approximation for general monotone submodular valuations.

\subsection{Paper Organization}\label{sec:organization}
\Cref{sec:algorithms} contains the algorithms for proving \Cref{thrm:LP-alg,thrm:LP-alg-oracle,thrm:contgreedy-alg}.
The hardness for SM, \Cref{thrm:SM-hardness}, appears in \Cref{sec:hardness}, where we also give a tight integrality gap for the LP underlying \Cref{thrm:LP-alg}.
The separation of MC and SM for $c = \nicefrac12$ is proved in \Cref{sec:separating}. The extension to $c \in (\nicefrac12, 1)$ appears in \Cref{app:SDP}.

\paragraph{Acknowledgements} We thank StackExchange user John for proposing the problem~\cite{54170}, and Zach Chase for useful discussions. We thank the anonymous IPCO reviewers for many helpful suggestions, including one that led to \Cref{thrm:LP-alg-oracle}.
Yuval Filmus has received funding from the European Union’s Horizon 2020 research and innovation program under grant agreement no.\ 802020-ERC-HARMONIC, and by ISF grant 507/24. Roy Schwartz and Alexander V.\ Smal have received funding from the European Union’s Horizon 2020 research and innovation program under grant agreement no.\ 852870-ERC-SUBMODULAR.

\section{Preliminaries}\label{sec:prelim}
Given a set function $f\colon 2^U\rightarrow \mathbb{R}_+$, the \emph{multilinear} extension $ F\colon [0,1]^U\rightarrow \mathbb{R}_+$ is defined as follows: $ F(x)\triangleq \sum _{T\subseteq E}f(T) \prod _{e\in T}x_e \prod _{e\notin T}(1-x_e)$, $ \forall x\in [0,1]^U$.
Intuitively, $F(x)$ has a probabilistic meaning: it is the expected value of $f$ on a random set which includes each element $e
\in E$ independently with probability $x_e$.

We require an algorithm for rounding a fractional point with respect to the multilinear extension. 
\begin{theorem}[Pipage Rounding \cite{AS04,CCPV07}]\label{thrm:pipage}
There exists a polynomial time algorithm that, given a matroid $ M=(U,\mathcal{I})$ and a point $ \mathbf{z}\in \mathcal{P}$ (where $ \mathcal{P}$ is the independent sets polytope associated with $M$), finds $ \mathbf{r}\in \mathcal{P}$ satisfying:
\begin{enumerate}[(1)]
\item $\mathbf{r}\in \mathcal{P} \cap \{ 0,1\} ^U$.
\item $ F(\mathbf{r}) \geq F(\mathbf{z})$.
\end{enumerate}
\end{theorem}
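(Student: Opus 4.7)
My plan is to round $\mathbf{z}$ to an integral point $\mathbf{r}\in\mathcal{P}$ by a sequence of ``swap'' moves that never decrease $F$, each of which either integralizes a coordinate or activates a new tight constraint of $\mathcal{P}$. After polynomially many steps $\mathbf{z}$ becomes a vertex of the matroid polytope, which is the indicator vector of an independent set; this gives the required $\mathbf{r}$.

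The key structural fact is that, for any submodular $f$, the multilinear extension $F$ is convex along any direction of the form $e_i - e_j$ with $i \neq j$. Indeed, $\partial^2 F / \partial x_i \partial x_j$ equals the expectation, over a random subset $S \subseteq U \setminus \{i,j\}$ obtained by sampling each $\ell$ independently with probability $x_\ell$, of the quantity $f(S\cup\{i,j\}) - f(S\cup\{i\}) - f(S\cup\{j\}) + f(S)$, which is $\leq 0$ by submodularity. Since the second derivative of $F$ along $e_i - e_j$ is $-2\,\partial^2 F / \partial x_i \partial x_j \geq 0$, the restriction of $F$ to any such line segment is convex, and its maximum is attained at an endpoint. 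Thus, given $\mathbf{z}$ and any pair $(i,j)$, I take the longest segment through $\mathbf{z}$ in direction $\pm(e_i - e_j)$ that stays in $\mathcal{P}$ and move to whichever endpoint has larger $F$; this preserves membership in $\mathcal{P}$ and does not decrease $F$.

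To actually make progress, I use the matroid structure to choose the pair $(i,j)$: among the tight rank constraints at $\mathbf{z}$, pick a minimal tight set $T$, and then choose fractional coordinates $i \in T$ and $j \notin T$. The matroid exchange axiom guarantees such a pair exists as long as $\mathbf{z}$ is not already a vertex, and after the swap either one of $x_i, x_j$ becomes $0$ or $1$, or some new rank constraint becomes tight. The main obstacle is not the rounding guarantee itself (which is a clean consequence of the convexity observation above) but the termination analysis: one needs to argue that the collection of tight rank constraints at $\mathbf{z}$ forms a sublattice of $2^U$, and that each swap either strictly refines this lattice or strictly reduces the number of fractional coordinates, yielding a polynomial (in fact $O(n^2)$) bound on the number of iterations.
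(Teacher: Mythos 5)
Your overall strategy is the standard pipage-rounding argument that the cited works [AS04, CCPV07] use (the paper itself only imports the theorem): convexity of $F$ along directions $e_i-e_j$ because $\partial^2 F/\partial x_i\partial x_j\le 0$ for submodular $f$, a swap to the better endpoint of the feasible segment, and a termination argument via the lattice of tight sets. The convexity computation is correct. However, your pair-selection rule contains a genuine error: you take a minimal tight set $T$ and fractional coordinates $i\in T$, $j\notin T$. With such a pair the direction $+(e_i-e_j)$ immediately violates the tight constraint $z(T)=r_M(T)$, so the feasible segment through $\mathbf{z}$ is one-sided (and if $j$ lies in some other tight set not containing $i$, it degenerates to the single point $\mathbf{z}$). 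Then $\mathbf{z}$ is itself an endpoint of the segment, and convexity only tells you the maximum is at an endpoint --- possibly $\mathbf{z}$ --- so you cannot guarantee a move that both preserves $F$ and makes progress; the iteration can stall. Worse, such a pair need not exist at all: for the uniform matroid with $k=cn$ and $z_i=c$ for every $i$ (exactly the situation arising in this paper), the unique tight set is $T=U$, so there is no fractional $j\notin T$, yet $\mathbf{z}$ is far from a vertex. Your appeal to the exchange axiom does not repair this.

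The correct rule, as in Calinescu--Chekuri--P\'al--Vondr\'ak, is to pick a \emph{minimal tight set $T$ containing at least two fractional coordinates} and swap along $e_i-e_j$ with \emph{both} $i,j\in T$ fractional: this keeps $z(T)$ constant, minimality plus uncrossing (tight sets are closed under union and intersection by submodularity of the rank function) guarantees a strictly positive step in both directions, and at either endpoint a coordinate becomes integral or a new tight set appears whose intersection with $T$ strictly refines the tight-set structure --- which is exactly the potential argument you defer to. Fractional coordinates contained in no tight set are handled separately by single-coordinate moves, using that $F$ is linear in each coordinate. A further point your sketch glosses over (as does the theorem statement, being a citation): comparing $F$ at the two endpoints requires evaluating $F$, which is exactly possible for explicitly given coverage functions but in the general value-oracle model needs sampling and introduces a small loss or randomization.
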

We note that instead of pipage rounding one can use, e.g., a deterministic version of swap rounding~\cite{CVZ10}.

\smallskip

The Max $k$-Vertex-Cover problem is a special case of MC.
Given an undirected graph $G$ on $n$ vertices, the elements $E$ are the edges of the graph, and each vertex $v$ corresponds to a set that contains all edges touching $v$.
Thus, the goal (as in MC) is to choose $X\subseteq V$ of size at most $k$ that maximize the total weight of edges covered by $X$, i.e., edges that have at least one endpoint in $X$.
We utilize the following result of Austrin, Benabbas, and Georgiou \cite{ABG16} for the case that $ c=\nicefrac{1}{2}$ (this result applies to the more general problem of Max $2$-CNF where exactly half the variables are required to be true).
The result of \cite{ABG16} is based on semi-definite programming and on the works of Lewin, Livant, and Zwick \cite{LLZ02} and Raghavendra and Tan \cite{RT12}. It is based on a certain \emph{simplicity conjecture}, recently proved by Brakensiek, Huang and Zwick~\cite{BHZ24}.
\begin{theorem}[Max $k$-Vertex-Cover \cite{ABG16}]\label{thrm:MaxkVC}
There is a polynomial time algorithm for Max $k$-Vertex-Cover with $ k=\nicefrac{n}{2}$ that achieves an approximation of at least $ 0.9401$.
\end{theorem}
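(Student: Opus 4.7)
The plan is to adapt the SDP-plus-rounding framework of Lewin--Livnat--Zwick for \textsc{Max 2-SAT} to the cardinality-constrained setting, following the pattern used by Raghavendra--Tan for CSPs with global constraints. First, I would encode Max $k$-Vertex-Cover with $k = n/2$ as a \textsc{Max 2-CNF} problem in which each edge $(u,v)$ becomes a clause $x_u \vee x_v$, together with the global balance constraint $\sum_i x_i = n/2$. The SDP relaxation has a distinguished unit vector $v_0$ together with vectors $v_i$ for each vertex, interpreted so that $(1 + \langle v_0, v_i \rangle)/2$ represents the marginal probability that $i$ is chosen; the balance constraint becomes $\sum_i \langle v_0, v_i \rangle = 0$. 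In addition to the standard $\ell_2^2$-triangle inequalities, I would include Raghavendra--Tan-style constant-size local consistency constraints, since these are what make the global balance constraint informative rather than merely an average condition.

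Second, I would design a rounding scheme in the spirit of the $\mathrm{THRESH}^-$ family of Lewin--Livnat--Zwick: project each $v_i$ onto a random Gaussian hyperplane, shift the threshold by a function of $\langle v_0, v_i\rangle$, and set $x_i = 1$ or $0$ accordingly. Two adjustments are needed relative to the plain 2-SAT case: the threshold function must be optimized under the balance constraint, and one must restore exact balance (not merely balance in expectation), e.g.\ by a swap-based or conditional-expectation derandomization step that cannot decrease the objective. To argue this preserves the SDP value up to the target factor, I would use that the ordering induced by the random projection is approximately exchangeable across vertices with the same $\langle v_0, v_i\rangle$, so a small perturbation to hit exact balance costs only $o(1)$.

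Third, I would analyze the approximation ratio on a per-edge basis: bound the expected fraction of $w_{uv}$ recovered as a function of the local SDP parameters $(\langle v_0, v_u\rangle, \langle v_0, v_v\rangle, \langle v_u, v_v\rangle)$, and then minimize the ratio to the SDP contribution over all locally consistent configurations. Here is where Austrin's simplicity conjecture enters: it reduces the worst case to a low-dimensional family of configurations, which can then be numerically optimized to yield the bound $0.9401$. Since the paper cites the recent resolution of this conjecture by Brakensiek--Huang--Zwick, I would invoke that theorem to convert the numerical worst-case analysis into a rigorous lower bound on the approximation ratio.

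The hard part will be the third step, specifically showing that the balance constraint is strong enough to rule out the worst configurations for unconstrained \textsc{Max 2-SAT} (whose LLZ ratio is substantially smaller than $0.9401$). Concretely, one must argue that among locally consistent triples the balance-constrained optimum of the rounding-gap program is exactly attained by a small, tractable family, and then choose the threshold function to flatten the gap across that family. Matching the $0.9401$ constant is really a numerical problem, but ensuring that the simplicity-style reduction applies in the cardinality-constrained regime, and that the swap step preserving exact balance is harmless, are the two places where care is required.
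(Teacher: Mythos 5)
The paper does not prove this statement at all: it is imported verbatim from Austrin--Benabbas--Georgiou~\cite{ABG16} (made unconditional by Brakensiek--Huang--Zwick~\cite{BHZ24}), so the only fair comparison is with the approach of the cited work, which the paper itself partially reproduces and generalizes in \Cref{app:SDP}. Your outline does track that approach: an SDP with a distinguished vector $\mathbf{u}_0$ and marginals for each vertex, a cardinality/balance constraint, Lewin--Livnat--Zwick-style Gaussian threshold rounding with thresholds depending on the marginals, a per-edge ratio analysis over locally consistent configurations, and the simplicity conjecture (now a theorem for $c=\nicefrac12$ at the relevant parameter) to reduce the worst case to a low-dimensional family that is optimized numerically to get $0.9401$.

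The genuine gap is in how you restore the exact cardinality $k=\nicefrac n2$. You invoke a few rounds of Raghavendra--Tan local constraints and then claim that, because the projection ordering is ``approximately exchangeable'' among vertices with equal marginals, a swap/perturbation step to hit exact balance costs only $o(1)$. That claim is unsupported: the rounded indicators $X_i$ can be strongly pairwise correlated (e.g., many vectors nearly parallel), in which case $\sum_i X_i$ does not concentrate and the rounded set can miss $\nicefrac n2$ by $\Omega(n)$ with constant probability, making the repair step costly. The ingredient you are missing is the Raghavendra--Tan \emph{conditioning} step on a $\mathrm{poly}(\nicefrac1\varepsilon)$-round Lasserre solution, which yields \emph{global decorrelation}: $\frac1{n^2}\sum_{i\neq j}|\tilde\rho_{i,j}|\leq\varepsilon$ while losing only $(1-\varepsilon)$ in SDP value. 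This is exactly what \cite{ABG16} (and \Cref{app:SDP} of the paper) use: decorrelation plus the bound $\Phi_2(\rho,\eta_1,\eta_2)\leq\eta_1\eta_2+2|\rho|$ gives $\operatorname{Var}(\sum_i X_i)=O(\varepsilon n^2)$, Chebyshev then shows the size exceeds $(1+\varepsilon^{\nicefrac13})cn$ with probability $O(\varepsilon^{\nicefrac13})$, and random trimming of the excess loses only a $(1-O(\varepsilon^{\nicefrac13}))$ factor. Local consistency constraints alone, without conditioning, do not deliver this, so as written your second step does not go through; with conditioning and the concentration-plus-trimming argument substituted for the exchangeability heuristic, the plan coincides with the cited proof.
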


\section{Algorithms for MC and SM}\label{sec:algorithms}

\subsection{Algorithm for MC}\label{sec:algMC}
Our algorithm is based on a simple ``non-oblivious'' rounding of the natural LP, i.e., it rounds an altered solution and not the optimal solution for the LP.
We start by stating the classic LP for MC (see, e.g., \cite{AS99}):
\[ \max \left\{ \sum _{e\in E}w_e x_e: x_e\leq \min \left\{ 1,\sum _{i\colon e\in S_i}y_i\right\} ~\forall e\in E, ~ \sum _{i=1}^n y_i\leq cn, ~ y_i\geq 0 ~ \forall i\in [n] \right\}. \]
In the above, every $ y_i$ indicates whether $S_i$ is chosen, and $x_e$ indicates whether $e$ is covered.
Clearly, for an optimal solution to this LP we have that $x_e\geq 0$, $\forall e\in E$, and $ y_i\leq 1$, $ \forall i\in [n]$.
From this point onwards we assume that $ \{x_e\} _{e\in E}$ and $\{ y_i\}_{i\in [n]}$ are an optimal solution to the above LP.

We define a new set of variables $z_i$, $ \forall i\in [n]$, by averaging the LP's solution with a uniform random fractional solution:
\[ z_i \triangleq \alpha c+(1-\alpha)y_i, \]
where $\alpha$ depends on $c$ and will be chosen later.
Our main observation is summarized in the following lemma.
It states that if every set $S_i$ is chosen independently with a probability of $z_i$ then the probability that element $e\in E$ is covered is at least $ \rho(c) x_e$.
We state the lemma via the multilinear extension $F$ of the set function $f\colon 2^{[n]}\rightarrow \mathbb{R}_+$ which is the weighted coverage function of the given instance of MC, i.e., $ f(T)\triangleq \sum _{e\in E}w_e\mathbf{1}_{\{ e\in \cup _{i\in T}S_i\}}$, $ \forall T\subseteq [n]$.

\begin{lemma}\label{lem:AlgMultilinear}
$ F(\mathbf{z})\geq \rho (c) \sum _{e\in E}x_e$, where $\rho(c)$ is defined as in \Cref{thrm:LP-alg}.    
\end{lemma}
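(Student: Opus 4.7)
The plan is to reduce the lemma to a per-element inequality and then to a scalar optimization problem in $\alpha$. Since $f$ is a coverage function, the multilinear extension has the closed form
\[
F(\mathbf{z}) = \sum_{e \in E} w_e \Bigl( 1 - \prod_{i : e \in S_i}(1 - z_i) \Bigr),
\]
so it suffices to show the pointwise bound $1 - \prod_{i : e \in S_i}(1 - z_i) \geq \rho(c)\, x_e$ for every $e \in E$ (the bound stated in the lemma is the weighted version $F(\mathbf{z}) \geq \rho(c) \sum_e w_e x_e$, which follows by summing against $w_e$).

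Fix $e \in E$, let $d_e = |\{i : e \in S_i\}|$, and assume $d_e \geq 1$, since otherwise $x_e = 0$. Substituting $z_i = \alpha c + (1-\alpha) y_i$ and using the LP constraint $\sum_{i : e \in S_i} y_i \geq x_e$, the arithmetic mean of the $d_e$ nonnegative numbers $1 - z_i$ is at most $1 - \alpha c - (1-\alpha) x_e / d_e$, and AM--GM yields
\[
\prod_{i : e \in S_i}(1 - z_i) \leq \Bigl( 1 - \alpha c - (1-\alpha) \tfrac{x_e}{d_e} \Bigr)^{d_e}.
\]
Viewed as a function of $x \in [0,1]$, the complement $p(x) = 1 - (1 - \alpha c - (1-\alpha) x / d_e)^{d_e}$ is concave with $p(0) \geq 0$ and $p(1) = 1 - \sigma(\alpha, d_e)$, so concavity on $[0,1]$ yields $p(x_e) \geq (1 - \sigma(\alpha, d_e))\, x_e$.

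To finish, I need $\alpha$ for which $\sigma(\alpha, m) \leq 1 - \rho(c)$ at every positive integer $m$. When $c = 1/s$, the convex combination $\alpha c + (1-\alpha)/s$ collapses to $1/s$ for every $\alpha$, so $\sigma(\alpha, s) = (1-1/s)^s = 1 - \rho(c)$ is an identity at $m = s$, and $\alpha$ need only be tuned to suppress $\sigma(\alpha, m)$ at the other integers. When $1/(s+1) < c < 1/s$, I would first verify that $\sigma(\alpha, s) - \sigma(\alpha, s+1)$ is monotone in $\alpha$ and changes sign on $(0,1)$, yielding the unique root $\alpha^* \in (0,1)$ promised by the theorem. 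The principal technical hurdle is then a unimodality claim: the smooth map $m \mapsto \sigma(\alpha^*, m)$ on $m \geq 1$ is (log-)unimodal, with its only critical point pinned between $s$ and $s+1$, so its maximum over positive integers is attained at both $m = s$ and $m = s+1$ with common value $1 - \rho(c)$. I would establish this by differentiating $\log \sigma(\alpha, m) = m \log\bigl(1 - \alpha c - (1-\alpha)/m\bigr)$ in $m$ and using the inequalities $1/(s+1) < c < 1/s$ to locate the sign of the derivative. Combining the three steps yields $F(\mathbf{z}) \geq \sum_e w_e (1 - \sigma(\alpha, d_e))\, x_e \geq \rho(c) \sum_e w_e x_e$, proving the lemma.
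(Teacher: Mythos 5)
Your reduction to the per-element inequality, the AM--GM step, and the concavity-in-$x_e$ step are exactly the paper's argument, and they are correct. The gap is in the remaining scalar problem, which is where the real content of the lemma lies and which you only gesture at. For $c=\nicefrac{1}{s}$ you observe (correctly) that $\sigma(\alpha,s)=(1-\nicefrac{1}{s})^s$ for every $\alpha$, and then say that ``$\alpha$ need only be tuned to suppress $\sigma(\alpha,m)$ at the other integers,'' as if existence of such a tuning were routine. It is not: at $\alpha=0$ one has $\sigma(0,m)=(1-\nicefrac{1}{m})^m$, whose supremum over $m$ is $\nicefrac{1}{e}>(1-\nicefrac{1}{s})^s$, and at $\alpha=1$ one has $\max_m\sigma(1,m)=\sigma(1,1)=1-c>(1-c)^s$, so neither extreme works and an intermediate $\alpha$ must actually be exhibited together with a proof that $m=s$ is the \emph{global} integer maximizer of $m\mapsto\sigma(\alpha,m)$. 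The paper does this by taking $\alpha^*=1-(s-1)\ln\frac{s}{s-1}$ (and checking $\alpha^*\in(0,1)$), showing $\partial^2\log\sigma(\alpha,m)/\partial m^2<0$, i.e.\ log-concavity in $m$, and verifying that the unique critical point of $m\mapsto\log\sigma(\alpha^*,m)$ sits exactly at $m=s$, using $1-\alpha^*c-(1-\alpha^*)/s=\nicefrac{(s-1)}{s}$ in the critical-point equation. None of this appears in your proposal, and without it the case $c=\nicefrac{1}{s}$ is not proved.

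For $\nicefrac{1}{(s+1)}<c<\nicefrac{1}{s}$ your plan does coincide with the paper's proof: the difference $\sigma(\alpha,s+1)-\sigma(\alpha,s)$ is monotone in $\alpha$ because $\sigma(\alpha,m)$ is increasing in $\alpha$ for $m<\nicefrac{1}{c}$ and decreasing for $m>\nicefrac{1}{c}$, it is positive at $\alpha=0$ and negative at $\alpha=1$, which gives the unique root $\alpha^*$; log-concavity in $m$ then pins the continuous maximizer strictly between $s$ and $s+1$, so the maximum over integers is $\sigma(\alpha^*,s)=\sigma(\alpha^*,s+1)=1-\rho(c)$, which is the definition of $\rho(c)$ in \Cref{thrm:LP-alg}. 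But in your write-up these steps are stated as intentions (``I would first verify\dots'', ``I would establish this by differentiating\dots'') rather than carried out, so as it stands the proposal is a faithful outline of the paper's route with its central computation --- the choice and analysis of $\alpha^*$ in both cases --- missing.
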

\begin{proof}
Recalling the probabilistic interpretation of $F$ (see \Cref{sec:prelim}), it suffices to fix an element $ e\in E$ and prove that: $ 1-\prod _{i:e\in S_i}(1-z_i)\geq \rho(c) x_e$.
For simplicity of presentation assume that $e$ belongs to $ S_1,\ldots,S_m$.
Hence,
\begin{align}
1-\prod _{i=1}^m (1-z_i) & =   1-\prod _{i=1}^m (1-\alpha c - (1-\alpha )y_i)  \label{ineq1}\\
& \geq 1 - (1-\alpha c - (1-\alpha)(y_1+\dots+y_m)/m)^m \label{ineq2}\\
& \geq 1 - (1-\alpha c - (1-\alpha)x_e/m)^m \label{ineq3}\\
& \geq x_e ( 1 - (1-\alpha c - (1-\alpha)/m)^m) . \label{ineq4}
\end{align}
In the above, \eqref{ineq1} is by the definition of $z_i$, \eqref{ineq2} is just the AM-GM inequality, \eqref{ineq3} follows from the constraint upper bounding $x_e$ in the LP, and \eqref{ineq4} follows from the concavity of $  1 - (1-\alpha c - (1-\alpha)x_e/m)^m $ in $x_e$ when $ m\geq 1$ (using $x_e \le 1$).

Given $c$, all that is left to conclude the proof is to choose $\alpha$ and prove that:
\[ \min _{m\geq 1}\{ 1 - (1-\alpha c - (1-\alpha)/m)^m\}\geq \rho(c). \]
We denote $ \sigma (\alpha,m)\triangleq (1-\alpha c - (1-\alpha)/m)^m = (1 - \nicefrac1m - (c - \nicefrac1m)\alpha)^m$ and $ \sigma (\alpha)\triangleq \max _{m\geq 1}\{ \sigma(\alpha,m)\}$, where the maximum is taken over integer $m$.
Note that $\sigma(\alpha,m)$ is decreasing in $\alpha$ when $ m>\nicefrac{1}{c}$, increasing in $\alpha$ when $ m<\nicefrac{1}{c}$, and equals $(1-\nicefrac{1}{m})^m$ when $ m=\nicefrac{1}{c}$.
Moreover, we note that $\sigma(\alpha,m)$ is log-concave in $m$ since
$
 \frac{\partial^2 \log \sigma(\alpha, m)}{\partial m^2} = -\frac{(1-\alpha)^2/m}{[(1-\alpha c)m - (1-\alpha)]^2} < 0
$.

Our proof considers two cases, according to whether $ c=\nicefrac{1}{s}$ for some integer $ s\in \mathbb{N}$ or not.
First, assume that $c=\nicefrac{1}{s} $ for some $ s\in \mathbb{N}$, where $s \ge 2$ (the case $c=1$ is trivial).
For any fixed $\alpha$, denote by $ m^*(\alpha)$ the (not necessarily integer) maximizer of $ \sigma (\alpha,m)$, i.e., $ m^*(\alpha)\triangleq \argmax _{m\geq 1} \{ \sigma (\alpha,m)\}$.
We prove that in this case, choosing $ \alpha ^* =1-(s-1)\ln{(\nicefrac s{(s-1)})}$ implies that $ m^*(\alpha^*)=s$; note $\alpha^* \in (0,1)$ since $\ln{(\nicefrac s{(s-1)})} < \nicefrac1{(s-1)}$.
Thus, $ \sigma(\alpha^*,m^*(\alpha))=(1-c)^{\nicefrac{1}{c}}$, since $ m^*(\alpha^*)=s$, $s=\nicefrac{1}{c}$, and $ \sigma(\alpha^*,\nicefrac{1}{c})=(1-c)^{\nicefrac{1}{c}}$.

Recall that $\sigma(\alpha,m)$ is log-concave in $m$, thus to find $ m^*(\alpha^*)$ it suffices to find the (unique) $m$ for which $ \partial(\log \sigma(\alpha^*,m))/\partial m=0$.
Direct calculation of the latter shows that it is equivalent to:
\[ \ln{(1-\alpha^* c - (1-\alpha^*)/m)} +
\frac{(1-\alpha^*)/m}{1-\alpha^* c - (1-\alpha^*)/m} = 0. \]
Plugging $ c=\nicefrac{1}{s}$, $ m=s$, and $ \alpha^*=1-(s-1)\ln{(\nicefrac s{(s-1)})}$, we have $1 - \alpha^*c - {(1-\alpha^*)/m} = \nicefrac{(s-1)}s$, from which the equality is easy to verify.

Second, we assume that $c$ satisfies $ \nicefrac{1}{(s+1)}<c<\nicefrac{1}{s}$ for some integer $ s\in \mathbb{N}$.
Define the following function $ \delta (\alpha)\triangleq \sigma(\alpha,s+1)-\sigma (\alpha,s)$.
Note that $ \delta (\alpha)$ is a decreasing function in $\alpha$ since: 
\begin{enumerate}[(1)]
\item $\sigma(\alpha,m)$ is increasing when $ m<\nicefrac{1}{c}$ (and $ s<\nicefrac{1}{c}$).
\item $ \sigma(\alpha,m)$ is decreasing when $ m>\nicefrac{1}{c}$ (and $ s+1>\nicefrac{1}{c}$).
\end{enumerate}
Moreover:
\[ \delta (0)=(1-\nicefrac{1}{(s+1)})^{s+1}-(1-\nicefrac{1}{s})^s > 0 ~\text{ and }~ \delta(1)=(1-c)^{s+1}-(1-c)^s <0. \]
Therefore, there exists a (unique) $ \alpha^*$ for which $ \delta (\alpha^*)=0$.
This $ \alpha^*$ satisfies: $ \sigma(\alpha^*,s+1) = \sigma(\alpha^*,s)$.
Moreover, since $ \sigma(\alpha^*,m)$ is log-concave in $m$, it must be the case that $ s<m^*(\alpha^*)<s+1$.
Thus, we proved that $\sigma(\alpha^*)=\sigma(\alpha^*,s)=\sigma(\alpha^*,s+1) < \sigma(\alpha^*,m^*(\alpha^*))$.
This concludes the proof of the lemma.
\end{proof}

Note that in the second case, $\rho(c) > 1-\sigma(\alpha^*, \nicefrac1c) = 1-(1-c)^{\nicefrac1c}$. We give a matching integrality gap for the LP in \Cref{sec:integrality-gap}.

\begin{proof}[Proof of \Cref{thrm:LP-alg}]
Our algorithm performs the following steps: 
\begin{enumerate}[(1)]
\item Find an optimal solution $ \{ x_e\}_{e\in E}$ and $ \{ y_i\}_{i\in [n]}$ to the LP. 
\item Compute $ z_i\triangleq \alpha c+ (1-\alpha)y_i$ $\forall i\in [n]$, where $\alpha$ is chosen as in the proof of \Cref{lem:AlgMultilinear}. 
\item Apply \Cref{thrm:pipage} on $ \mathbf{z}$ and $\mathcal{P}\triangleq  \{ \mathbf{x}\in [0,1]^{[n]}:\sum _{i=1}^n x_i \leq cn\}$ (note that $ \mathcal{P}$ is the independent set polytope of the uniform matroid defined on $ [n]$ with $ k=cn$).
\end{enumerate}
Step (3) can be applied since $ \mathbf{z}\in \mathcal{P}$.
Thus, \Cref{lem:AlgMultilinear}, alongside the observation that $ \sum _{e\in E}w_e x_e$ is an upper bound on the value of an optimal solution, suffice to conclude the proof.
\end{proof}
It is worth noting that the approximation ratio $\rho(c)$ has an elegant closed form solution when $ \nicefrac{1}{2}<c<1$:
\[ \rho(c) = 1-\frac{(1-c)\left( 1-2\sqrt{c(1-c)}\right)}{(2c-1)^2}, ~~~~~~~\forall c\in (\nicefrac{1}{2},1).\]

To see this, we first determine $\alpha^*$ by solving the equation $\sigma(\alpha^*,1) = \sigma(\alpha^*,2)$. This is a quadratic equation $(1-c)\alpha^* = (\nicefrac12-(c-\nicefrac12)\alpha^*)^2$ whose solutions are $\frac{1 \pm 2\sqrt{c(1-c)}}{(2c-1)^2}$.
Since $1 > (1-2c)^2$, the only solution in $[0,1]$ is $\alpha^* = \frac{1 - 2\sqrt{c(1-c)}}{(2c-1)^2}$. We then compute $\rho(c) = 1-\sigma(\alpha^*,1) = 1-(1-c)\alpha^*$.

\subsection{Algorithm for MC in the value oracle model}\label{sec:algMC-oracle}

The algorithm underlying \Cref{sec:algMC} seems to rely on knowing the set system explicitly. However, using simple considerations, we can make do given only value oracle access to $f$. The essential observation is that for a given set $I \subseteq [n]$, the total weight of elements belonging to all sets $S_i$ for $i \in I$ and to no set $S_j$ for $j \notin I$ is
\[
 w_I := \sum_{J \subseteq I} (-1)^{|I \setminus J|} f(J).
\]
This formula follows easily from M\"obius inversion.

In particular, for every $M$ we can determine, using $O(n^M)$ value oracle queries, the value of $w_I$ for all $|I| \leq M$, as well as
\[
 w' := \sum_{|I| > M} w_I = f([n]) - \sum_{|I| \leq M} w_I.
\]

We now solve the following LP:
\begin{align*}
 \max \bigg\{ \sum_{|I| \leq M} w_I x_I + w' x' :
 &x_I \leq \min \left\{ 1, \sum_{i \in I} y_i \right\} ~\forall |I| \leq M, \\ &x' \leq \min \left\{ 1, \sum_{i=1}^n y_i \right\},
 \sum_{i=1}^n y_i \leq cn, ~ y_i \geq 0 ~ \forall i \in [n] \bigg\}.
\end{align*}

This LP corresponds to the instance in which all elements belonging to more than $M$ sets are assumed to belong to all sets. In particular, if $O$ is an optimal solution to the original instance, then it is a feasible solution to the LP, and its objective value is at least $f(O)$.

It remains to prove an analog of \Cref{lem:AlgMultilinear}.
The proof of the lemma boils down to the following two inequalities:
\begin{gather*}
    1 - \prod_{i \in I} (1 - z_i) \geq \rho(c) x_I, \\
    \sum_{|I| > M} w_I \left(1 - \prod_{i \in I} (1 - z_i)\right) \geq \rho(c) w' x'.
\end{gather*}
The first inequality is proved in \Cref{lem:AlgMultilinear}. Since $x' \leq 1$, the second inequality will follow from
\[
 |I| > M \Longrightarrow 1 - \prod_{i \in I} (1 - z_i) \geq \rho(c).
\]
To this end, we observe that $\alpha > 0$ (this is proved as part of the proof of \Cref{lem:AlgMultilinear}), and so $1 - \prod_{i \in I} (1 - z_i) \geq 1-(1-\alpha c)^{M + 1} \geq \rho(c)$ for some constant $M = M(c)$, completing the proof of the generalized version of \Cref{lem:AlgMultilinear} and so of \Cref{thrm:LP-alg-oracle}.

One drawback of this algorithm is that its running time $O(n^M)$ depends on $c$, where $M$ blows up as $c \to 1$. In contrast, the algorithm behind \Cref{thrm:LP-alg} has the same running time for all values of $c$.

\subsection{Algorithm for SM}\label{sec:algSM}
Our algorithm for SM follows the same type of analysis used by Feldman, Naor, and Schwartz \cite{FNS11} when applying the measured continuous greedy to a partition matroid independence constraint.
The measured continuous greedy algorithm of Feldman, Naor, and Schwartz \cite{FNS11} is based on the celebrated continuous greedy algorithm of Calinescu, Chekuri, P\'{a}l, and Vondr\'{a}k \cite{CCPV11}.

Both algorithms (approximately) solve the following problem: \[ \max \{ F(\mathbf{x}):\mathbf{x}\in \mathcal{P}\}, \]
where $ \mathcal{P}\subseteq [0,1]^U$ is a down-monotone polytope.\footnote{$\mathcal{P}$ is down-monotone if $\mathbf{x}\in \mathcal{P}$ and $ 0\leq \mathbf{y}\leq \mathbf{x}$ imply $ \mathbf{y}\in \mathcal{P}$.} 
The measured continuous greedy algorithm can be viewed as a continuous time process whose location at time $t$ is $ \mathbf{y}(t)$.\footnote{This continuous time process can be implemented as an algorithm by choosing a small step size and discretizing the movement of the process. This incurs a small additive $o(1)$ loss in the approximation (we refer the reader to \cite{CCPV11,FNS11} for the details).}
It is defined as follows: $ \mathbf{y}(0)\leftarrow 0$ and
\begin{align} \frac{\partial y_i(t)}{\partial t} \leftarrow (1-y_i(t))x_i(t)~~~\forall i\in U,\forall t\geq 0,\label{contgreedy1}
\end{align}
where $ \mathbf{x}(t)\leftarrow \argmax \{ \mathbf{x}\cdot \nabla F(\mathbf{y}(t)):\mathbf{x}\in \mathcal{P}\}$.
It is proved in \cite{FNS11} that:
\begin{align}F(\mathbf{y}(t))\geq (1-e^{-t})f(S^*)~~~\forall t\geq 0,\label{contgreedy2}
\end{align}
where $S^*\subseteq U$ is an optimal solution to the problem.

In our case $\mathcal{P}$ is just the uniform matroid polytope, i.e., $ \mathcal{P}=\{ \mathbf{x}\in [0,1]^U:\sum _{i\in U}x_i\leq k\}$ (recall that $ k=cn$), and $S^*$ is an integral optimal solution, i.e., $ S^*=\argmax \{ f(S):S\subseteq U, |S|\leq k\}$.
Clearly, \eqref{contgreedy2} implies that one would like to terminate the process as late as possible since this would result in a solution with higher value.
However, \eqref{contgreedy1} implies that the longer the process runs the larger $ \mathbf{y}(t)$ becomes, and at some point it might move outside of $\mathcal{P}$.
Thus, our goal is to find the largest stopping time $ T_c$ for which we are guaranteed that $ \mathbf{y}(T_c)\in \mathcal{P}$.

We are now ready to prove \Cref{thrm:contgreedy-alg}.
\begin{proof}[Proof of \Cref{thrm:contgreedy-alg}]
We note that at time $ t=0$: $ \sum _{i\in U}y_i(0)=0$.
Moreover, one can verify that for every $t\geq 0$ and $ i\in U$: \[ y_i(t)\leq 1-e^{-\int_{s=0}^t x_i(s)ds}. \]
Since the function $ 1-e^{-t}$ is concave in $t$, $ \sum _{i\in U}y_i(t)$ is maximized when every $ y_i(s)$, for every $ 0\leq s\leq t$, is increased at the same rate of $\nicefrac{k}{n}=c$ (recall that $ \mathbf{x}(s)\in \mathcal{P}$, i.e., $ \sum _{i\in U}x_i(s)\leq k$).
Thus, we can conclude that $ \sum _{i\in U}y_i(t)\leq n(1-e^{-ct})$.
Choose $T_c$ to be the time for which this upper bound is tight, i.e., $ 1-e^{-cT_c}=\nicefrac kn=c$, implying that $ T_c=\ln(\nicefrac{1}{(1-c)})/c$.
Plugging this $T_c$ into \eqref{contgreedy2} yeilds that $ F(\mathbf{y}(T_c))\geq (1-(1-c)^{\nicefrac{1}{c}})f(S^*)$.
Applying pipage rounding (\Cref{thrm:pipage}) concludes the proof.
\end{proof}

\section{Value Oracle Hardness for SM}\label{sec:hardness}

We show that no polynomial time algorithm for SM gives an approximation ratio better than $\rho(c)$ using Vondr\'ak's symmetry gap technique~\cite{V13}. We only describe the special case of a cardinality constraint (see the first example in \cite[Section 2]{V13}).

\begin{theorem}[{Symmetry Gap \cite{V13}}] \label{thrm:symmetry-gap}
Let $f\colon 2^U \to \mathbb{R}_+$ be a monotone submodular function with multilinear extension $F\colon [0,1]^U \to \mathbb{R}_+$, and suppose that $f$ is symmetric under the action of some symmetry group $\mathcal{G} \subseteq \operatorname{Sym}(U)$. Let $\mathrm{OPT} = \max \{ f(S) : |S| = c |U| \}$, let $\overline{\mathrm{OPT}} = \max \{ F(x) : \sum_{e \in U} x_e = c|U|, \allowbreak x \text{ is invariant under } \mathcal{G}\}$, and let $\rho = \overline{\mathrm{OPT}}/\mathrm{OPT}$ be the \emph{symmetry gap}.

Then for every $\epsilon > 0$, any (possibly randomized) algorithm for monotone submodular maximization subject to a cardinality constraint $k = cn$ that achieves an approximation of $(1+\epsilon)\rho$ requires exponentially many value queries.
\end{theorem}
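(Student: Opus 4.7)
The plan is to apply Vondr\'ak's ``blow-up and plant'' construction, specialized to the uniform matroid constraint $k = cn$. The goal is to convert the symmetry gap $\rho$ for $f$ on $U$ into a pair of monotone submodular functions on a much larger ground set that are indistinguishable in the value oracle model but whose optima differ by a factor approaching $\rho$. Since the ratio $k/n$ is preserved by a uniform blow-up, this directly transfers the gap to a hardness result for our constraint.

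First, I would fix a large integer $N$ and blow up to the ground set $\tilde{U} = U \times [N]$, partitioned into ``clouds'' $C_e = \{e\} \times [N]$. The cardinality bound becomes $\tilde{k} = cnN$, preserving $c$. For $S \subseteq \tilde{U}$, let $\mathbf{x}(S) \in [0,1]^U$ denote the cloud-density vector $x_e(S) = |S \cap C_e|/N$, and let $\xi(x) = \mathbb{E}_{\sigma \in \mathcal{G}} \sigma(x)$ be the group-averaging operator. Define the ``symmetric'' set function
\[
 \tilde{f}(S) = (1-\lambda)\, F(\xi(\mathbf{x}(S))) + \lambda\, F(\mathbf{x}(S))
\]
for a small parameter $\lambda > 0$, and the ``planted'' family $f^{\pi}(S) = \tilde{f}(\pi^{-1}(S))$ for a uniformly random permutation $\pi$ of $\tilde{U}$ that shuffles clouds arbitrarily.

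Next, I would verify three properties. (i) Both $\tilde{f}$ and $f^{\pi}$ are monotone submodular; this follows from a direct second-difference computation using that $F$ is multilinear and $\xi$ is linear, with the term $\lambda\, F(\mathbf{x}(S))$ contributing strict concavity along moves within a single cloud. (ii) For any feasible $S$ one has $\tilde{f}(S) \leq \overline{\mathrm{OPT}} + O(\lambda)\, \mathrm{OPT}$, because $\xi(\mathbf{x}(S))$ is $\mathcal{G}$-invariant with coordinate sum $cn$. (iii) The instance $f^{\pi}$ admits a feasible solution of value approaching $\mathrm{OPT}$, obtained by placing entire clouds $\pi(C_e)$ for $e$ ranging over an asymmetric integral optimum $O \subseteq U$ with $|O| = cn$. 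These three facts together give a multiplicative gap approaching $\rho$ between the optima of the two instances.

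The main obstacle, and the technical heart of Vondr\'ak's argument, is the indistinguishability step: for a uniformly random $\pi$ and $N$ growing fast enough (e.g., $N = n^{\omega(1)}$), any algorithm making $n^{O(1)}$ value queries sees answers on $\tilde{f}$ and on $f^{\pi}$ that coincide up to $o(1)$ with probability $1-o(1)$. The intuition is that random permutations of large clouds hide non-symmetric structure: the cloud counts of every queried set concentrate around their $\mathcal{G}$-symmetric averages, so $F(\mathbf{x}(\pi^{-1}(S)))$ is essentially equal to $F(\xi(\mathbf{x}(\pi^{-1}(S))))$. Quantitatively this is an Azuma-type concentration inequality combined with a union bound along an adaptive decision tree of depth $n^{O(1)}$, and Yao's principle then promotes the statement to randomized algorithms. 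Combining indistinguishability with the value gap, any algorithm achieving $(1+\varepsilon)\rho$ on $f^{\pi}$ would attain value $(1+\varepsilon)\overline{\mathrm{OPT}} - o(1)$ on $\tilde{f}$, contradicting (ii) once $\lambda$ is sent to $0$ slowly with $\varepsilon$.
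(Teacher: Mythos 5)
Note first that the paper does not prove this theorem itself; it imports it from Vondr\'ak \cite{V13}, so your sketch has to be measured against that argument. Your scaffolding is the right one (cloud blow-up preserving $k/n=c$, group averaging, concentration plus a union bound over the adaptive query tree, Yao's principle), but the central construction is flawed in a way that destroys the gap. You define a single function $\tilde f(S) = (1-\lambda)F(\xi(\mathbf{x}(S))) + \lambda F(\mathbf{x}(S))$ and plant it by relabeling, $f^{\pi} = \tilde f \circ \pi^{-1}$. Two problems. First, since $f^{\pi}$ is merely a relabeling of $\tilde f$, the ``two instances'' have identical optima, so there is no gap ``between the optima of the two instances''; the comparison has to be between the optimum of the planted instance and the value achievable by any algorithm making polynomially many queries (equivalently, the optimum of the fully symmetrized function). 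Second, and more fatally, your claim (iii) is false for your $\tilde f$: taking whole clouds over an asymmetric optimum $O$ with $|O|=cn$ gives value $(1-\lambda)F(\xi(\mathbf{1}_O)) + \lambda f(O) \leq (1-\lambda)\overline{\mathrm{OPT}} + \lambda\,\mathrm{OPT}$, which tends to $\overline{\mathrm{OPT}}$, not $\mathrm{OPT}$, as $\lambda \to 0$. With your weighting the symmetric part dominates, so the planted optimum is itself only about $\overline{\mathrm{OPT}}$ and the resulting ratio tends to $1$, not $\rho$. The weights must go the other way: the hidden (hard) function must be essentially $F(\mathbf{x}(S))$ itself, so that whole clouds of $O$ achieve value close to $\mathrm{OPT}$, while the function the algorithm effectively interacts with is the symmetrized one, whose value is bounded by $\overline{\mathrm{OPT}}$ everywhere.

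Relatedly, the indistinguishability step as you state it (``answers coincide up to $o(1)$'') is not sufficient in the exact value-oracle model: if the two functions differ at all on a queried set, the oracle reveals it. This is precisely the technical heart of \cite{V13} (Lemma 3.2 there): one constructs perturbations $\hat F \approx F$ and $\hat G \approx F\circ\xi$ that agree \emph{exactly} whenever $\mathbf{x}(S)$ is close to its symmetrization, while remaining monotone and submodular after discretization. Concentration then shows that with high probability every adaptively chosen query, and the output set, is nearly balanced across the hidden clouds, so the transcript on the two instances coincides exactly, the output has value at most $\overline{\mathrm{OPT}}+\epsilon$, and the planted optimum is at least $\mathrm{OPT}-\epsilon$, giving the factor $(1+\epsilon)\rho$. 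A global convex combination cannot substitute for this ``flattening near the symmetric region'' construction, and your sketch omits it. (A minor point: the term $\lambda F(\mathbf{x}(S))$ is linear, not strictly concave, along moves within a single cloud; submodularity of both terms does hold, via concavity of $F$ along nonnegative directions, so your point (i) is fine in substance.)
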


In our applications, $f$ will always be a coverage function.

\begin{proof}[Proof of \Cref{thrm:SM-hardness}]
As in the proof of \Cref{thrm:LP-alg}, we consider two cases: $c = \nicefrac1s$ for $s \in \mathbb{N}$, and $\nicefrac1{(s+1)} < c < \nicefrac1s$ for $s \in \mathbb{N}$.

The case $c = \nicefrac1s$ is the second example in \cite[Section 2]{V13}, but we repeat it here for completeness. Let $f\colon 2^{\{1,\dots,s\}} \to \mathbb{R}_+$ be the coverage function corresponding to $s$ sets covering a common single element of weight~$1$, with multilinear extension $F(x_1,\dots,x_s) = 1 - (1-x_1)\cdots(1-x_s)$. Then $\mathrm{OPT} = 1$ and $\overline{\mathrm{OPT}} = F(\nicefrac1s, \dots, \nicefrac1s) = 1 - (1-\nicefrac1s)^s$, and so $\rho = 1 - (1-\nicefrac1s)^s = \rho(c)$. \Cref{thrm:symmetry-gap} completes the proof.

Second, suppose that $c$ satisfies $\nicefrac1{(s+1)} < c < \nicefrac1s$. Let $c = N/D$, and define $a = (s+1-\nicefrac1c)N \in \mathbb{N}_+$ and $b = (\nicefrac1c-s)N \in \mathbb{N}_+$. We construct a coverage function $f\colon 2^{U} \to \mathbb{R}_+$: there are $a$ elements of type $A$ of weight $p/a$ each, where $p \in [0,1]$ is a parameter, and $b$ elements of type $B$ of weight $(1-p)/b$ each. Each element of type $A$ is covered by $s$ many singleton sets of type $A$, and each element of type $B$ is covered by $s+1$ many singleton sets of type $B$. The total number of sets is $|U| = sa + (s+1)b$. We note that $c|U| = N = a+b$ and so $\mathrm{OPT} \geq a \cdot (p/a) + b \cdot ((1-p)/b) = 1$ by covering each element exactly once. To complete the proof, we will choose $p$ so that $\overline{\mathrm{OPT}} = \rho(c)$.

The function $f$ is invariant under a group $\mathcal{G}$ whose two orbits are (1) all sets of type $A$ and (2) all sets of type $B$. Consequently a solution is invariant under $\mathcal{G}$ if it gives the same value $x$ to all $sa$ sets of type $A$ and the same value $y$ to all $s(b+1)$ sets of type $B$. So $\overline{\mathrm{OPT}} = \max \{ g(x,y) : 0 \leq x,y \leq 1, \tfrac{a}{a+b}sx + \tfrac{b}{a+b}(s+1)y = 1 \}$ where
$g(x,y) = p(1-(1-x)^s) + (1-p)(1-(1-y)^{s+1})$.

Recall the function $\sigma(\alpha,m) = (1-\alpha c-(1-\alpha)/m)^m = (1-1/m-\alpha(c-1/m))^m$ from \Cref{sec:algMC}. We can write $g(x,y) = p(1-\sigma(\alpha,s)) + (1-p)(1-\sigma(\beta,s+1))$ where $\alpha,\beta$ satisfy $x = \nicefrac1s - \alpha(\nicefrac1s-c)$ and $y = \nicefrac1{(s+1)} + \beta(c-\nicefrac1{(s+1)})$. Solving for $\alpha,\beta$ gives $\alpha = \frac{\nicefrac1s-x}{\nicefrac1s-c}$ and $\beta = \frac{y-\nicefrac1{(s+1)}}{c-\nicefrac1{(s+1)}}$.

We claim that $\alpha = \beta$. Indeed, recall that $\tfrac{a}{a+b}sx + \tfrac{b}{a+b}(s+1)y = 1$, so $\frac{as}{a+b}(x-\nicefrac1s) + \frac{b(s+1)}{a+b}(y-\nicefrac1{(s+1)}) = 0$, so $\frac{\nicefrac1s-x}{b(s+1)} = \frac{y-\nicefrac1{(s+1)}}{as}$. The claim follows from $b(s+1) = (s(s+1)/c)(\nicefrac1s-c)$ and $as = (s(s+1)/c)(c-\nicefrac1{(s+1)})$.

In \Cref{sec:algMC} we showed that there is a unique value $\alpha^* \in [0,1]$ such that $\sigma(\alpha^*,s) = \sigma(\alpha^*,s+1)$. The corresponding values $x^*,y^*$ (obtained by substituting $\alpha,\beta := \alpha^*$ in the expressions for $x,y$) satisfy $x^* \in [c,\nicefrac1s]$ and $y^* \in [\nicefrac1{(s+1)},c]$, and so this is a feasible solution for $\overline{\mathrm{OPT}}$. In the remainder of the proof, we will find a value of $p \in [0,1]$ for which $g(x,y)$ is maximized at $(x^*,y^*)$, and so $\overline{\mathrm{OPT}} = 1-\sigma(\alpha^*,s) = \rho(c)$. \Cref{thrm:symmetry-gap} will then complete the proof.

Let $G(x) = g(x,y(x))$, where $y(x)$ is the unique value such that $\frac{a}{a+b}sx + \frac{b}{a+b}(s+1)y(x) = 1$. Straightforward computation shows that $G'' \leq 0$ and so $G$ is concave. In order for $x^*$ to be the maximum, we need $G'(x^*) = 0$. Equivalently, 
$b(s+1) \cdot \partial g/\partial x = as \cdot \partial g/\partial y$, which leads to the equation:
\[
 b(s+1) \cdot p s(1-x^*)^{s-1} = as \cdot (1-p) (s+1) (1-y^*)^s.
\]
Since $(1-x^*)^s = \sigma(\alpha^*,s) = \sigma(\alpha^*,s+1) = (1-y^*)^{s+1}$, this is equivalent to $p\frac{b}{1-x^*} = (1-p)\frac{a}{1-y^*}$. When $p = 0$, the RHS is larger, and when $p = 1$, the LHS is larger. Hence there is a (unique) $p$ for which both sides are equal. For this value of $p$ we get $G'(x^*) = 0$ and so $(x^*,y^*)$ maximizes $g(x,y)$, completing the proof.
\end{proof}

\subsection{Integrality Gap of LP for MC} \label{sec:integrality-gap}

Using a method of Ageev and Sviridenko~\cite{AS04}, we can lift the symmetry gap of \Cref{thrm:symmetry-gap} to an integrality gap for the LP underlying \Cref{thrm:LP-alg}.

\begin{theorem} \label{thrm:LP-gap}
For every rational constant $0 < c \leq 1$ and $\epsilon > 0$, the linear program in the proof of \Cref{thrm:LP-alg} has an instance with integrality gap at most $(1+\epsilon) \rho(c)$.
\end{theorem}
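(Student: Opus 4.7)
The approach is to blow up the symmetric-gap instances from the proof of \Cref{thrm:SM-hardness} into coverage instances on complete uniform hypergraphs, where the LP of \Cref{thrm:LP-alg} attains the trivial total-weight upper bound of $1$ while the integer optimum is forced to track the continuous symmetric maximum $\rho(c)$. For $c = \nicefrac{1}{s}$, I would take $n$ vertices as the sets and all $\binom{n}{s}$ hyperedges as uniformly weighted elements (the set for vertex $v$ contains precisely the hyperedges through $v$), with $k = n/s$. Setting $y_v = \nicefrac{1}{s}$ saturates every coverage constraint at $x_e = 1$, so LP OPT $= 1$; picking any $k$ vertices leaves $\binom{n(1-c)}{s}$ hyperedges uncovered, so the integer optimum equals $1 - \binom{n(1-c)}{s}/\binom{n}{s}$, which converges to $1 - (1-c)^s = \rho(c)$ from above as $n \to \infty$. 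Choosing $n$ large enough gives an integrality gap at most $(1+\epsilon)\rho(c)$.

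For $\nicefrac{1}{(s+1)} < c < \nicefrac{1}{s}$ I would use two disjoint vertex classes $V_A$ and $V_B$: place the complete $s$-uniform hypergraph on $V_A$ (total weight $p$) and the complete $(s+1)$-uniform hypergraph on $V_B$ (total weight $1-p$), where $p$ is the same parameter chosen in the proof of \Cref{thrm:SM-hardness}. Choose $|V_A|:|V_B|=(c - \nicefrac{1}{(s+1)}):(\nicefrac{1}{s} - c)$ as large multiples of suitable denominators so that $k = c(|V_A|+|V_B|)$ is an integer; then the LP is feasible with $y_v = \nicefrac{1}{s}$ on $V_A$ and $y_v = \nicefrac{1}{(s+1)}$ on $V_B$, saturating every coverage constraint and giving LP OPT $= 1$. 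The integer problem decouples across the two hypergraphs: picking $k_A$ vertices in $V_A$ and $k-k_A$ in $V_B$ yields an objective that converges, as the sizes grow, to $\max \{ p(1-(1-\gamma)^s) + (1-p)(1-(1-\delta)^{s+1}) : \lambda\gamma + (1-\lambda)\delta = c,\ \gamma,\delta \in [0,1] \}$ with $\lambda = |V_A|/(|V_A|+|V_B|)$. Under the substitution $\gamma = \nicefrac{1}{s} - \alpha(\nicefrac{1}{s}-c)$ and $\delta = \nicefrac{1}{(s+1)} + \beta(c - \nicefrac{1}{(s+1)})$ this is exactly the symmetric-gap maximization from the hardness proof, whose maximum equals $\rho(c)$ by the choice of $p$.

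The substantive optimization has already been carried out in the proof of \Cref{thrm:SM-hardness}, so what remains is routine bookkeeping: ensuring integrality of $|V_A|,|V_B|,k_A,k$ by scaling using rationality of $c$, and controlling the convergence of the discrete integer maximum to its continuous counterpart via the standard estimate $\binom{n-k}{m}/\binom{n}{m} = (1-k/n)^m + O(1/n)$ for bounded $m$. The main conceptual point---and the reason the lift works---is that the LP genuinely attains the trivial upper bound $1$ on these complete-hypergraph instances, with every coverage constraint made tight by a uniform (or two-level uniform) $y$; once this is in place, the symmetric-gap bound translates immediately into an LP integrality gap of at most $(1+\epsilon)\rho(c)$.
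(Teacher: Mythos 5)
Your proposal is correct and follows essentially the same route as the paper: the same complete $s$-uniform construction for $c=\nicefrac1s$, and for $\nicefrac1{(s+1)}<c<\nicefrac1s$ the same two-part instance (an $s$-uniform part of weight $p$ and an $(s+1)$-uniform part of weight $1-p$, with the part sizes in the ratio $sa:(s+1)b$, which coincides with your $(c-\nicefrac1{(s+1)}):(\nicefrac1s-c)$), with the uniform fractional solution certifying LP value $1$ and the integer optimum reduced to the symmetric-gap maximization already analyzed in the hardness proof. The only differences are cosmetic (normalization of weights and the explicit scaling parameter $M$).
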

\begin{proof}
First consider the case $c = \nicefrac1s$, where $s \in \mathbb{N}$. For a parameter $M \in \mathbb{N}$, consider a set system with $\binom{Ms}{s}$ elements $e_I$ of unit weight indexed by subsets $I \subseteq \{1,\dots,Ms\}$ of size $s$. The sets are $S_i = \{ e_I : i \in I \}~\forall i \in \{1,\dots,Ms\}$. The constant solution $y_i = \nicefrac1s~\forall i$ is feasible and has value $\binom{Ms}{s}$ since each $e_I$ is covered exactly $s$ times. Conversely, every integral solution has value $\binom{Ms}{s} - \binom{(1-c)Ms}{s}$: if the solution chooses the sets $S_i$ for $i \in A$, where $|A| = cMs$, then an element $e_I$ is covered unless $I \subseteq \overline{A}$. The integrality gap is thus $1 - \binom{(1-c)Ms}{s}/\binom{Ms}{s}$, which tends (as $M \to \infty$) to $1 - (1-c)^s = \rho(c)$.

Second, suppose that $\nicefrac1{(s+1)} < c < \nicefrac1s$ for $s \in \mathbb{N}$. Let $a,b,p$ be as in the proof of \Cref{thrm:SM-hardness}. For a parameter $M \in \mathbb{N}$, consider a set system with elements of two types: $\binom{Msa}{s}$ elements $e'_I$ of total weight $p$ indexed by subsets of $\{1,\dots,Msa\}$ of size $s$, and $\binom{M(s+1)b}{s}$ elements $e''_J$ of total weight $1-p$ indexed by subsets of $\{1,\dots,M(s+1)b\}$ of size $s+1$; all elements of the same type have the same weight. The sets are $S'_i = \{ e'_I : i \in I \}~\forall i \in \{1,\dots,Msa\}$ and $S''_j = \{ e''_J : j \in J \}~\forall j \in \{1,\dots,M(s+1)b\}$. The solution $y'_i = \nicefrac1s~\forall i$ and $y''_j = \nicefrac1{(s+1)}~\forall j$ is feasible since $Ma + Mb = c(Msa + M(s+1)b)$, and has value $1$.

Consider an integer solution choosing $xMsa$ sets $S'_i$ and $yM(s+1)b$ sets $S''_j$, where $xMsa + yM(s+1)b = M(a+b)$; this is the same constraint as in the definition of $\overline{\mathrm{OPT}}$ in the proof of \Cref{thrm:SM-hardness}. The value of this solution is
\begin{multline*}
 \frac{p}{\binom{Msa}{s}} \left[ \binom{Msa}{s} - \binom{xMsa}{s} \right] + \frac{1-p}{\binom{M(s+1)b}{s+1}} \left[ \binom{M(s+1)b}{s+1) - \binom{xM(s+1)b}{s+1}} \right] = \\ p(1-(1-x)^s) + (1-p)(1-(1-y)^{s+1}) + o_M(1).
\end{multline*}
The proof of \Cref{thrm:SM-hardness} shows that this is at most $\rho(c) + o_M(1)$. This expression bounds the integrality gap.
\end{proof}

\section{Separating MC and SM: Improved Algorithm for MC}\label{sec:separating}
In this section we present an improved algorithm for MC that separates MC and SM.
We start by proving that when $ c=\nicefrac{1}{2}$, MC admits an improved approximation of at least $ 0.7533$, thus separating MC and SM (recall that when $ c=\nicefrac{1}{2}$ SM admits a hardness result of $\rho(\nicefrac12) = \nicefrac{3}{4}$ by \Cref{thrm:SM-hardness}).
Also, we assume that the MC instance is given explicitly. Later on, we comment how to modify the argument to accommodate only value oracle access.

Denote by $ X^*\subseteq [n]$ an optimal solution to MC, and let $ \text{OPT}\triangleq \sum _{e\in \cup _{i\in X^*}S_i}w_e$ be the value of an optimal solution.
We partition the elements $E$ into two collections, depending on how many sets contain an element.
Specifically, let $ m_e$ be the number of sets that contain element $e\in E$, i.e., $ m_e\triangleq |\{ i:e\in S_i\} |$.
Consider the following partition of $E$: $ E_2\triangleq \{ e\in E:m_e=2\}$ and $ E_{\neq 2}\triangleq \{ e\in E:m_e\neq 2\}$.
This partition naturally implies a partition of $ \text{OPT}$ into $ \text{OPT}_2$ and $ \text{OPT}_{\neq 2}$ as follows:
\[
\text{OPT}_2 \triangleq \sum _{e\in E_2\colon e\in \cup _{i\in X^*}S_i}w_e x_e ~\text{ and }~
\text{OPT}_{\neq 2}\triangleq \sum _{e\in E_{\neq 2}\colon e\in \cup _{i\in X^*}S_i}w_e x_e .
\]
Note that $ \text{OPT}=\text{OPT}_2 + \text{OPT}_{\neq 2}$.

Recall from our proof of \Cref{thrm:LP-alg} that our algorithm guarantees for every element $e\in E$ that the probability that $e$ is covered by our algorithm is at least:
\begin{align}
x_e \cdot ( 1-\sigma (\alpha^*,m_e)),\label{minimizerMhalf}
\end{align}
and we chose $\alpha^*=1-\ln{2}$ when $ c=\nicefrac{1}{2}$.
Recall that for this value of $c$ and choice of $\alpha^*$ we had $ m^*(\alpha^*)=2$ which is the $m$ that minimizes \eqref{minimizerMhalf}.
Thus, for every element $e\in E_2$ we have:
\begin{align}
   1-\sigma(\alpha^*,m_e)=1-\sigma(\alpha^*,2)=\rho(\nicefrac{1}{2})=\nicefrac{3}{4}. \label{E2-guarantee}
\end{align}
However, for elements $ e\in E_{\neq 2}$ we can obtain an improved guarantee since for these elements $m_e$ cannot equal the $m$ that minimizes \eqref{minimizerMhalf}.
Specifically, we use the following notation:
\begin{align}
    \rho _{\neq 2}\triangleq \min _{m\geq 1\colon m\neq 2} \{ 1-\sigma(\alpha^*,m)\}, \label{Enot2-guarantee}
\end{align}
\providecommand{\rhohalf}{\rho}
where a straightforward calculation shows that $ \rho _{\neq 2}=1-\sigma(\alpha^*,3)\geq 0.766796$.
We also use the notation $\rhohalf \triangleq \rho(\nicefrac12)$.

We are now ready to prove \Cref{thrm:separating} in the explicit setting.

\begin{proof}[Proof of \Cref{thrm:separating} when the MC instance is given explicitly]
Fix $ \varepsilon >0$ that will be chosen later.
The algorithm can assume that it is in one of two cases: 
\begin{enumerate}[(1)]
\item $\text{OPT}_2 \leq \varepsilon \cdot \text{OPT}$; or \item $ O/(1+\varepsilon)\leq \text{OPT} _2 \leq O$ for some value $O$ known to the algorithm.
\end{enumerate}

To this end, the algorithm first finds a constant $\beta$-approximation $A$ for $\text{OPT}$, e.g., using greedy, satisfying $\beta\cdot\text{OPT}\leq A \leq \text{OPT}$. It then constructs $ O(\ln{(\nicefrac{1}{\varepsilon})/\varepsilon})$ intervals of the form $ [O/(1+\varepsilon),O]$ starting with $[(\nicefrac{A}{\beta})/(1+\varepsilon),\nicefrac{A}{\beta}]$ and ending with $ [(\nicefrac{A}{\beta})/(1+\varepsilon)^j,(\nicefrac{A}{\beta})/(1+\varepsilon)^{j-1}]$, where $j$ is the smallest integer for which $ (\nicefrac{A}{\beta})/(1+\varepsilon)^j\leq \varepsilon \cdot A$. It also constructs the interval $ [0, (\nicefrac{A}{\beta})/(1+\varepsilon)^j]$.
Since $ \text{OPT}_2\leq \text{OPT}$,
necessarily $\text{OPT}_2$ belongs to one of these intervals. If we know that $\text{OPT}_2$ belongs to one of these intervals then one of the two cases above holds. We can ``guess'' which interval $\text{OPT}_2$ belongs to by trying out all possibilities.


In each of these cases our algorithm behaves differently.
First, let us focus on the case that $ \text{OPT}_2 \leq \varepsilon \cdot \text{OPT}$.
In this case we use our algorithm from the proof of \Cref{thrm:LP-alg} on the instance after we discard all elements in $E_2$. Clearly, the expected value of the output is at least:
\begin{align}
\rho _{\neq 2}\cdot \text{OPT}_{\neq 2} \geq \rho _{\neq 2}\cdot (1-\varepsilon)\text{OPT}, \label{case1}    
\end{align}
since:
\begin{enumerate}[(1)]
\item $ \text{OPT}_{\neq 2}$ is a lower bound on the value of an optimal solution of the instance after we removed all elements in $E_2$; and
\item $\text{OPT}_{\neq 2}=\text{OPT}-\text{OPT}_2\geq (1-\varepsilon )\text{OPT}$.
\end{enumerate}

Second, let us focus on the case that $ O/(1+\varepsilon)\leq \text{OPT} _2 \leq O$ for some value $O$.
In this case we choose the best of the following two solutions:
\begin{enumerate}
    \item Apply \Cref{thrm:MaxkVC} for Max $k$-Vertex-Cover on the instance that contains only the elements of $E_2$.\footnote{The instance only contains elements from $E_2$ and thus it reduces to Max $k$-Vertex-Cover.} \label{alg1}
    \item Apply our algorithm from \Cref{thrm:LP-alg} with an added constraint in the LP, described below.\label{alg2}
\end{enumerate}
When examining \Cref{alg1}, it is easy to see that the expected value of the output is at least:
\begin{align}
    \rho _{\text{VC}}\cdot \text{OPT}_2 ,\label{guarantee1}
\end{align}
where $ \rho_{\text{VC}}\geq 0.94$ as guaranteed by \Cref{thrm:MaxkVC}.

When examining \Cref{alg2}, we add to the LP the constraint: $ \sum _{e\in E_2}w_e x_e \leq O$.
Note that the LP remains feasible and its optimal value is still at least $\text{OPT}$, since we know that: $\text{OPT} _2 \leq O$.
Applying the algorithm from the proof of \Cref{thrm:LP-alg} gives that the expected value of the output is at least:
\begin{align}
 &\rhohalf\sum _{e\in E_2}w_e x_e + \rho _{\neq 2}\sum _{e\in E_{\neq 2}}w_e x_e \nonumber\\ 
 =& \rho _{\neq 2} \sum _{e\in E} w_e x_e - (\rho _{\neq 2}-\rhohalf) \sum _{e\in E_2}w_e x_e \nonumber \\
  \geq & \rho _{\neq 2}\cdot \text{OPT} - (\rho _{\neq 2}-\rhohalf) \cdot O \label{value-ineq1}\\
  \geq & \rho _{\neq 2}\cdot \text{OPT} - (\rho _{\neq 2}-\rhohalf)\cdot (1+\varepsilon)\cdot \text{OPT}_2 \label{value-ineq2}\\
  = & \rho _{\neq 2}\cdot \text{OPT} _{\neq 2} + ((1+\varepsilon ) \rhohalf-\varepsilon \cdot \rho _{\neq 2})\cdot \text{OPT}_2 .\label{guarantee2}
\end{align}
Inequality \eqref{value-ineq1} follows since $ \sum _{e\in E}w_e x_e \geq \text{OPT}$, $\rho _{\neq 2}\geq \rhohalf $, and the additional LP constraint we added.
Moreover, inequality \eqref{value-ineq2} follows from $ O\leq (1+\varepsilon) \cdot \text{OPT}_2$.
Finally, \eqref{guarantee2} follows from the fact that $ \text{OPT}=\text{OPT}_2 + \text{OPT}_{\neq 2}$.

Thus, taking the best of \eqref{guarantee1} and \eqref{guarantee2} results in the following lower bound on the expected value of the output:
\begin{align}
    \max \{ \rho _{\text{VC}}\cdot \text{OPT}_2~ ,~ \rho _{\neq 2}\cdot \text{OPT} _{\neq 2} + (\rhohalf - \varepsilon(\rho_{\neq2}-\rhohalf))\cdot \text{OPT}_2\} .\label{guarantee3}
\end{align}
Guarantee \eqref{guarantee3} can be further lower bounded by taking a convex combination of the two guarantees it encompasses. Choosing the convex combination which makes the coefficients of $\text{OPT}_2$ and $ \text{OPT}_{\neq 2}$ equal, and recalling that $ \text{OPT} = \text{OPT}_2+\text{OPT}_{\neq 2}$, results in the following lower bound on the expected value of the algorithm:
\[ \frac{\rho _{VC}\cdot \rho _{\neq 2}}{\rho _{\text{VC}}+(1+\varepsilon)\cdot (\rho _{\neq 2}-\rhohalf)} \cdot \text{OPT}.\]
Plugging the values of $\rhohalf$, $ \rho _{\neq 2}$, and $ \rho _{\text{VC}}$, as well as choosing, e.g., $\varepsilon = \nicefrac{1}{n}$, yields an approximation of $ 0.7533 - o(1)$, as required.
\end{proof}

We can extend \Cref{thrm:separating} to the value oracle setting along the lines of \Cref{thrm:LP-alg-oracle}.

\begin{proof}[Proof of \Cref{thrm:separating} when the MC instance is given by a value oracle]

As in the proof of \Cref{thrm:LP-alg-oracle}, we can determine in time $O(n^2)$ the total weight of all elements covered exactly by $S_i,S_j$ for all $i,j$. This allows us to explicitly construct the instance containing only the elements of $E_2$, as well as provide value oracle access to the instance containing the remaining elements. In the rest of the proof, instead of applying \Cref{thrm:LP-alg} on the latter instance, we apply \Cref{thrm:LP-alg-oracle}.
\end{proof}

In \Cref{app:SDP} we show, assuming a conjecture related to the SDP used by \cite{ABG16,BHZ24}, that MC admits an improved approximation not only for $ c=\nicefrac{1}{2}$ but (empirically) also for every $\nicefrac{1}{2}\leq c <1 $: plotting the performance $\rho_{\text{SDP}}(c)$ of the improved algorithm against $\rho(c)$ suggests that $\rho_{\text{SDP}}(c) > \rho(c)$ for all $\nicefrac{1}{2}\leq c < 1$.
If this conjecture is indeed true, then this separates MC and SM not only when $ c=\nicefrac{1}{2}$ but for every $ \nicefrac{1}{2}\leq c <1$.

We are unable to extend this to lower values of $c$ (even conditionally) since the SDP only handles elements covered at most twice, and these elements are the hardest for the LP, i.e., the approximation achieved for these elements is the lowest.

\subsection{Improved Approximation for MC when \texorpdfstring{$ \nicefrac{1}{2} < c <1$}{1/2 < c < 1}}\label{app:SDP}

In order to handle the case when $ \nicefrac{1}{2} < c <1$, we formulate an SDP relaxation for the special case of MC where each element $ e\in E$ belongs to at most two sets.
Recalling that $m_e$ denotes the number of sets containing $e$, if $ m_e=1$ we denote the (single) set that contains $e$ by $ S_{i_e}$ and if $ m_e=2$ we denote the two sets that contain $e$ by $ S_{i_{e,1}}$ and $ S_{i_{e,2}}$.
Moreover, let $E_1=\{ e\in E:m_e=1\}$ and $ E_2=\{ e\in E:m_e=2\}$.

Following the line of work \cite{ABG16,LLZ02,RT12}, we consider the following SDP formulation:
\begin{align}
\max ~~~ &\sum _{e\in E_1}w_e \| \mathbf{u}_{i_e}\|^2_2 + \sum _{e\in E_2}w_e (\| \mathbf{u}_{i_{e,1}}\|_2^2 + \| \mathbf{u}_{i_{e,2}}\|_2^2 - \mathbf{u}_{i_{e,1}}\cdot \mathbf{u}_{i_{e,2}} ) & \nonumber \\
{\text{s.t.}}~~~ & \| \mathbf{u}_0\|_2^2 = 1 & \nonumber\\
& \mathbf{u}_0 \cdot \mathbf{u}_i = \| \mathbf{u}_i\|_2^2 & \forall i\in [n] \nonumber\\
& \mathbf{u}_i \cdot \mathbf{u}_j \leq \min \{ \| \mathbf{u}_i\|_2^2, \| \mathbf{u}_j\| _2^2\} & \forall i,j\in [n] \nonumber\\
& \mathbf{u}_i\cdot \mathbf{u}_j \geq \max \{ 0,\| \mathbf{u}_i\|_2^2 + \| \mathbf{u}_j\| _2^2 - 1\} & \forall i,j\in [n] \nonumber\\
& \sum _{i=1}^n \| \mathbf{u}_i\|_2^2 \leq cn &\nonumber 
\end{align}
This relaxation contains a special unit vector denoted by $ \mathbf{u}_0$, and a vector $\mathbf{u}_i$ for every set $S_i$, $ i\in [n]$.

The following lemma states that this SDP formulation is indeed a relaxation of MC.
We denote by $\text{OPT}$ the value of an optimal solution and $\text{OPT}_{\text{SDP}}$ the value of an optimal solution to the SDP relaxation
\begin{lemma}\label{lem:SDP-relaxation}
For every instance of MC satisfying $\max _{e\in E}\{ m_e\}\leq 2$, $\text{OPT}_{\text{SDP}} \geq \text{OPT} $.
\end{lemma}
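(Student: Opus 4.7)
The plan is to exhibit a feasible SDP solution of value exactly $\text{OPT}$, which immediately yields $\text{OPT}_{\text{SDP}} \geq \text{OPT}$. The natural construction is to lift the integer optimum to vectors using a single axis: take any optimal $X^* \subseteq [n]$ with $|X^*|\leq cn$, fix an arbitrary unit vector $\mathbf{e}$, set $\mathbf{u}_0 = \mathbf{e}$, and let $\mathbf{u}_i = \mathbf{e}$ if $i \in X^*$ and $\mathbf{u}_i = \mathbf{0}$ otherwise. Intuitively, the inner products in this SDP are meant to encode probabilities of the form $\Pr[i \in X,\, j \in X]$, so rank-$1$ solutions aligned with $\mathbf{u}_0$ faithfully represent deterministic $\{0,1\}$ solutions.

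Next I would run through the SDP constraints one by one and verify them in the three cases ``both $i,j \in X^*$'', ``exactly one in $X^*$'', ``neither in $X^*$'', recording that $\|\mathbf{u}_i\|_2^2 = \mathbf{1}[i\in X^*]$ and $\mathbf{u}_i\cdot\mathbf{u}_j = \mathbf{1}[i\in X^*]\mathbf{1}[j\in X^*]$. Concretely, $\|\mathbf{u}_0\|_2^2=1$ is immediate; $\mathbf{u}_0\cdot\mathbf{u}_i = \|\mathbf{u}_i\|_2^2$ holds because both sides are $\mathbf{1}[i\in X^*]$; the upper bound $\mathbf{u}_i\cdot\mathbf{u}_j \leq \min\{\|\mathbf{u}_i\|_2^2,\|\mathbf{u}_j\|_2^2\}$ and the lower bound $\mathbf{u}_i\cdot\mathbf{u}_j \geq \max\{0,\|\mathbf{u}_i\|_2^2+\|\mathbf{u}_j\|_2^2-1\}$ are each equalities in the three cases; and finally $\sum_i \|\mathbf{u}_i\|_2^2 = |X^*| \leq cn$.

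The last step is to compute the SDP objective for this solution and identify it with $\text{OPT}$. For each $e \in E_1$, the contribution is $w_e \|\mathbf{u}_{i_e}\|_2^2 = w_e \cdot \mathbf{1}[i_e \in X^*]$, i.e.\ $w_e$ iff $e$ is covered. For each $e \in E_2$, the contribution $w_e(\|\mathbf{u}_{i_{e,1}}\|_2^2 + \|\mathbf{u}_{i_{e,2}}\|_2^2 - \mathbf{u}_{i_{e,1}}\cdot\mathbf{u}_{i_{e,2}})$ evaluates, via inclusion--exclusion on $\{0,1\}$ indicators, to $w_e$ if at least one of $i_{e,1},i_{e,2}$ lies in $X^*$ and to $0$ otherwise---again matching coverage exactly. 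Summing over $E_1 \cup E_2 = E$ yields precisely the weight of elements covered by $X^*$, i.e.\ $\text{OPT}$.

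I do not expect any genuine obstacle here: the lemma is essentially a sanity check that the SDP constants were written down correctly, and the only thing to be careful about is that the ``$\max\{0,\cdot\}$'' and ``$\min\{\cdot,\cdot\}$'' constraints, which look potentially restrictive, in fact hold with equality on $\{0,1\}$-valued solutions. No nontrivial rounding or analysis of $\alpha^*$ is needed, since this lemma only asserts relaxation, not an approximation guarantee.
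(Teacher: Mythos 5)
Your proposal is correct and follows essentially the same route as the paper: lift the integral optimum $X^*$ to the rank-one solution $\mathbf{u}_i = \mathbf{u}_0$ for $i \in X^*$ and $\mathbf{u}_i = \mathbf{0}$ otherwise, verify feasibility, and observe that the objective reproduces the covered weight exactly. Your case-by-case check of the $\min$/$\max$ constraints is just a more explicit rendering of the paper's ``clearly, all constraints are satisfied.''
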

\begin{proof}
Let $X^*\subseteq [n]$ be an optimal solution.
Choose $ \mathbf{u}_0=1$, and for every $ i\in X^*$, set $ \mathbf{u}_i=1$ and $ \mathbf{u}_i=0$ otherwise.
Clearly, all constraints are satisfied.
Focusing on the objective function:
\begin{enumerate}[(1)]
\item If $e\in E_1$ then $e$'s contribution to the objective equals $1$ if and only if $i_e\in X^*$ (and $0$ otherwise). \item If $e\in E_2$ then $e$'s contribution to the objective equals $1$ if and only if $ i_{e,1}\in X^*$ or $ i_{e,2}\in X^*$ (and $0$ otherwise).
\end{enumerate}
Hence, the objective value of the above solution equals $ \text{OPT}$.
This concludes the proof.
\end{proof}

The presentation of the above SDP formulation can be simplified using additional notations.
First, we denote $\mu _i \triangleq \| \mathbf{u}_i\|_2^2$, and note that $ \mu _i$ can be intuitively seen as the marginal probability of the set $S_i$ being chosen by the SDP solution.
Second, we denote $ \rho _{i,j}\triangleq \mathbf{u}_i \cdot \mathbf{u}_j$, and note that $\rho _{i,j}$ can be intuitively seen as the joint probability of both sets $S_i$ and $S_j$ being chosen by the SDP solution.
Using these notations, the SDP can be rewritten as follows:
\begin{align}
\max ~~~ &\sum _{e\in E_1}w_e \mu _{i_e} + \sum _{e\in E_2}w_e (\mu _{i_{e,1}}+\mu _{i_{e,2}}-\rho _{i_{e,1},i_{e,2}} ) & \nonumber \\
{\text{s.t.}}~~~ & \| \mathbf{u}_0\|_2^2 = 1 & \nonumber\\
& \mathbf{u}_0 \cdot \mathbf{u}_i = \mu _i & \forall i\in [n] \nonumber\\
& \rho _{i,j}\leq \min \{ \mu _i, \mu _j\} & \forall i,j\in [n] \nonumber\\
& \rho _{i,j} \geq \max \{ 0,\mu _i + \mu _j - 1\} & \forall i,j\in [n] \nonumber\\
& \sum _{i=1}^n \mu _i \leq cn &\nonumber 
\end{align}

Our improved algorithm, similarly to Raghavendra and Tan \cite{RT12}, solves the $\text{poly}(\nicefrac{1}{\varepsilon})$-round Lasserre hierarchy of the above SDP, for some small constant $\varepsilon >0$.
Following \cite{RT12}, Austrin, Benabbas, and Georgiou \cite[Lemma 3.1]{ABG16} apply a conditioning procedure that produces a solution $ \{ \mathbf{u}_i\} _{i=0}^n$ to the SDP formulation that satisfies the following conditions:
\begin{enumerate}
    \item $ \{ \mathbf{u}_i\} _{i=0}^n$ is feasible. \label{SDP-feasible}
    \item The objective value of $ \{ \mathbf{u}_i\} _{i=0}^n$ is at least $ (1-\varepsilon) \cdot \text{OPT}$.\label{SDP-value}
    \item $\epsilon \leq \mu_1,\dots,\mu_n \leq 1-\epsilon$.
    \item $ \{ \mathbf{u}_i\} _{i=0}^n$ satisfies the {\em global decorrelation} property:
    \[ \frac{1}{n^2}\sum _{i\neq j} |\tilde{\rho } _{i,j}|\leq \varepsilon ,\]
    where $\tilde{\rho}_{i,j}\triangleq (\rho _{i,j}-\mu _i \mu _j)/(\sqrt{\mu _i (1-\mu _i)}\sqrt{\mu _j (1-\mu _j)})$.\label{SDP-global}
\end{enumerate}

We note that $ \tilde{\rho}$ has a simple and elegant geometric interpretation.
Let us write $\mathbf{u}_i$ as a component in the direction of $ \mathbf{u}_0$ and a component in the direction orthogonal to $ \mathbf{u}_0$.
Specifically, $\mathbf{u}_i=\mu _i \mathbf{u}_0 + \mathbf{u}_i^{\perp}$, where $\mathbf{u}_i^{\perp}$ is a vector that is orthogonal to $ \mathbf{u}_0$.
Denote by $ \tilde{\mathbf{u}}_i^{\perp}$ the normalized vector of $ \mathbf{u}_i^{\perp}$, i.e., $ \tilde{\mathbf{u}}_i^{\perp}\triangleq  \mathbf{u}_i^{\perp}/\|\mathbf{u}_i^{\perp}\|_2$.
A simple calculation shows that $ \| \mathbf{u}_i^{\perp}\|_2^2=\mu _i (1-\mu _i)$, and thus: $\tilde{\mathbf{u}}_i^{\perp}\triangleq  \mathbf{u}_i^{\perp}/\sqrt{\mu _i (1-\mu _i)} $.
Therefore, we can conclude that $\tilde{\rho}_{i,j}=\tilde{\mathbf{u}}_i^{\perp} \cdot \tilde{\mathbf{u}}_j ^{\perp}$.

Our algorithm follows the same outline as \cite{LLZ02,RT12}, but similarly to the proof of \Cref{thrm:LP-alg}, it changes the marginal probabilities (as given by the SDP).
Formally, the altered marginal probability, which averages the probability given by the SDP's solution and a random uniform solution ($c$), is defined as follows:
\[ \eta _i \triangleq \alpha c + (1-\alpha) \mu _i.\]
Here, as before, $0\leq \alpha\leq 1$ is a parameter that will be chosen later.
The algorithm outputs a solution $X\subseteq [n]$ by picking a random gaussian in the space orthogonal to $ \mathbf{u}_0$, projecting the normalized component of each $ \mathbf{u}_i$ in this space on the gaussian, and choosing $S_i$ if it falls in the lower gaussian tail that has a probability mass of $\eta _i$.
Formally:
\[ X\leftarrow \{ i\in [n]: \tilde{\mathbf{u}}_i^{\perp}\cdot \mathbf{g} \leq \Phi ^{-1}(\eta _i)\} .\]
Here, $ \mathbf{g}$ is a random gaussian vector in the space orthogonal to $\mathbf{u}_0$, and $\Phi$ is the standard gaussian cumulative distribution function.
It is important to note that $X$ is not necessarily a feasible solution.
However, if this happens our algorithm chooses uniformly at random $cn$ out of the sets in $X$.

\paragraph{Analyzing the Value of the Output $X$.}
First, consider an elements $ e\in E_1$.
Our goal is to lower bound the ratio of the probability that $e$ is covered in the solution, i.e., $ i_e\in X$, and the contribution of $e$ to the objective of the SDP.
The former equals $ \eta _{i_e}=\alpha c + (1-\alpha)\mu _{i_e}$, and latter equals $\mu _{i_e}$.
Thus, if $ r(\alpha,1)$ is the worst approximation achieved for an elements in $E_1$, we get that:
\begin{align}
   r(\alpha,1)\triangleq \min _{0\leq \mu \leq 1}\left\{ \frac{\alpha c + (1-\alpha) \mu}{\mu}\right\}=\alpha c +1 - \alpha. \label{SDP-Alg-1}
\end{align}

Second, consider an elements $ e\in E_2$.
As before, our goal is to lower bound the ratio of the probability that $e$ is covered in the solution, i.e., $ i_{e,1}\in X$ or $ i_{e,2}\in X$, and the contribution of $e$ to the objective of the SDP.
The latter equals $ \mu _{i_{e,1}}+\mu _{i_{e,2}}-\rho _{i_{e,1},i_{e,2}}$.

Let us focus on the former:
\begin{align}
& \Pr [i_{e,1}\in X ~\text{ or }~ i_{e,2}\in X]  \nonumber \\
 = & 1-\Pr [\tilde{\mathbf{u}}_{i_{e,1}}^{\perp}\cdot \mathbf{g}> \Phi ^{-1}(\eta _{i_{e,1}}),\mathbf{u}_{i_{e,2}}^{\perp}\cdot \mathbf{g}> \Phi ^{-1}(\eta _{i_{e,2}}) ]   \label{SDP-Alg-Obj1}\\
= &1-\Pr [\tilde{\mathbf{u}}_{i_{e,1}}^{\perp}\cdot \mathbf{g}< \Phi ^{-1}(1-\eta _{i_{e,1}}),\mathbf{u}_{i_{e,2}}^{\perp}\cdot \mathbf{g}< \Phi ^{-1}(1-\eta _{i_{e,2}}) ] .\label{SDP-Alg-Obj2}
\end{align}
In the above, equality \eqref{SDP-Alg-Obj1} follows from the definition of the algorithm, and equality \eqref{SDP-Alg-Obj2} follows from the facts that $ -\Phi^{-1}(t)=\Phi^{-1}(1-t)$ and that $-\mathbf{g}$ and $ \mathbf{g}$ have the same distribution.
Note that $ (\tilde{\mathbf{u}}_{i_{e,1}}^{\perp}\cdot \mathbf{g},\mathbf{u}_{i_{e,2}}^{\perp}\cdot \mathbf{g})$ is a bivariate gaussian random variable distributed with the following parameters:
\[ \left( \begin{matrix} \tilde{\mathbf{u}}_{i_{e,1}}^{\perp}\cdot \mathbf{g} \\ \mathbf{u}_{i_{e,2}}^{\perp}\cdot \mathbf{g} \end{matrix}\right) \sim N \left(\left(\begin{matrix} 0 \\ 0\end{matrix}\right), \left( \begin{matrix} 1 & \tilde{\rho}_{i_{e,1},i_{e,2}} \\  \tilde{\rho}_{i_{e,1},i_{e,2}} & 1\end{matrix}\right) \right) .\]
Using the following notation: $ \Phi _2 (\rho,\eta _1, \eta _2)\triangleq \Pr [X< \Phi ^{-1}(\eta _1),Y < \Phi ^{-1}(\eta_2)]$, where:
\[ \left( \begin{matrix} X \\ Y \end{matrix}\right) \sim N \left(\left(\begin{matrix} 0 \\ 0\end{matrix}\right), \left( \begin{matrix} 1 & \rho \\  \rho & 1\end{matrix}\right) \right),\]
we can rewrite \eqref{SDP-Alg-Obj2} as follows: $ 1-\Phi _2 (\tilde{\rho}_{i_{e,1},i_{e,2}},1-\eta _{i_{e,1}},1-\eta _{i_{e,2}})$.
Hence, if $r(\alpha,2)$ is the worst approximation achieved for an elements in $E_2$, we get that:
\begin{align}
    r(\alpha,2)\triangleq \min _{(\tilde{\rho},\rho,\mu_1,\mu_2)\in \mathcal{F}} \left\{ \frac{1-\Phi_2(\tilde{\rho},1-\alpha c - (1-\alpha)\mu _1,1-\alpha c - (1-\alpha )\mu _2)}{\mu _1 + \mu_2-\rho}\right\}. \label{SDP-Alg-2}
\end{align}
In the above, $\mathcal{F}$ encodes the collection of feasible configurations of $ (\tilde{\rho},\rho,\mu_1,\mu_2)$ as determined by the SDP.
Thus:
\begin{align*}
\mathcal{F}=\{ (\tilde{\rho},\rho,\mu_1,\mu_2): \, & 0\leq \mu_1, \mu_2 \leq 1,\\
&\rho\leq \min \{ \mu_1,\mu_2\}, \rho\geq \max \{ 0,\mu_1+\mu_2-1\},\\ &\tilde{\rho}=(\rho - \mu_1\mu_2)/(\sqrt{\mu_1(1-\mu_1)}\sqrt{\mu_2(1-\mu_2)}) \}.
\end{align*}

Therefore, the overall lower bound on the approximation factor is achieved by taking the worst of $ r(\alpha,1)$ \eqref{SDP-Alg-1} and $ r(\alpha,2)$ \eqref{SDP-Alg-2}: $r(\alpha)\triangleq \min \{ r(\alpha,1),r(\alpha,2)\}$.
Combining this with \eqref{SDP-value}, we get that:
\[ \mathbb{E} \left[ \sum _{e\in \cup _{i\in X}S_i} w_e\right]\geq (1-\varepsilon) \cdot r(\alpha) \cdot \text{OPT}.\]

\vspace{3pt}
\noindent {\bf Analyzing the Size of $X$.}
Denote by $X_i$ the indicator that $ i\in X$, i.e., $S_i$ is chosen to the solution, and let $ X\triangleq \sum _{i=1}^n X_i$ be the number of sets chosen.
Clearly, $ \mathbb{E}[X]\leq cn$.
Our goal is to prove that $X$ is concentrated around its mean.

First, let us bound the variance of $X$:
\begin{align}
\text{Var}(X)=\sum  _{i=1}^n \text{Var}(X_i)+\sum _{i\neq j}\text{Cov}(X_i,X_j).\label{var1}
\end{align}
The first term on the right hand side of \eqref{var1} can be easily upper bounded by $cn$, since $ \text{Var}(X_i)\leq \mathbb{E}[X_i]$, $ \forall i\in [n]$, and $\mathbb{E}[X]\leq cn$.
The second term on the right hand side of \eqref{var1} can be bounded as follows:
\begin{align}
\sum _{i\neq j}\text{Cov}(X_i,X_j)= \sum _{i\neq j}\left(\Phi _2(\tilde{\rho}_{i,j},\eta _i,\eta _j) -\eta_i \eta_j\right) \leq 2\sum _{i\neq j} |\tilde{\rho }_{i,j}| \leq 2\varepsilon n^2.
\end{align}
The penultimate inequality follows from, e.g., \cite[Lemma 2.7]{ABG16}, who showed that: $\Phi _2(\rho,\eta_1,\eta_2)\leq \eta_1 \eta_2 + 2 |\rho|$.
Moreover, the last inequality follows from the global decorrelation property \eqref{SDP-global}.
Hence, $\text{Var}(X)\leq cn+2\varepsilon n^2 = O(\varepsilon n^2)$.\footnote{If $n < \nicefrac1\varepsilon$ then we solve the instance optimally by brute force, and otherwise $\operatorname{Var}(X) \leq 3\varepsilon n^2$.}
Applying Chebyshev's inequality yields for every $\delta >0$:
\begin{align}
    \Pr [X\geq (1+\delta)cn]\leq \text{Var}(X)/(\delta cn)^2=O(\varepsilon \delta ^2).
\end{align}
Thus, choosing, e.g., $\delta  = \varepsilon ^{\nicefrac{1}{3}}$, gives that with a probability of at least $1-O(\varepsilon ^{\nicefrac{1}{3}})$ we have: $X\leq (1+\varepsilon ^{\nicefrac{1}{3}})cn$.

\vspace{3pt}
\noindent {\bf Combining Value and Size of X Simultaneously.}
We execute the algorithm until we obtain a solution $X$ of size smaller than $ (1+\varepsilon ^{\nicefrac{1}{3}})cn$.
Therefore, from the law of total expectation:
\begin{align}
&\mathbb{E}\left[\sum _{e\in \cup _{i\in X}}w_e \;\middle|\; X<(1+\varepsilon ^{\nicefrac{1}{3}})cn\right]  \nonumber\\
\geq &\mathbb{E}\left[\sum _{e\in \cup _{i\in X}}w_e\right]-\mathbb{E}\left[\sum _{e\in \cup _{i\in X}}w_e \;\middle|\; X\geq (1+\varepsilon ^{\nicefrac{1}{3}})cn\right]\cdot \Pr \left[X\geq (1+\varepsilon ^{\nicefrac{1}{3}}cn)\right]  \nonumber\\
\geq & (1-\varepsilon)\cdot r(\alpha) \cdot \text{OPT}-\frac{\text{OPT}}{c}\cdot O(\varepsilon ^{\nicefrac{1}{3}})
\label{SDP-combined}\\
= & (1-O(\varepsilon ^{\nicefrac{1}{3}}))\cdot r(\alpha)\cdot \text{OPT}.\nonumber
\end{align}
Inequality \eqref{SDP-combined} follows from the upper bound on the probability of the bad event that $X$ is at least $(1+\varepsilon ^{\nicefrac{1}{3}})cn$, and from the fact that $\text{OPT}\geq c|E|$.

\vspace{3pt}
\noindent {\bf Completing the Analysis.}
So far our algorithm obtains a solution $X$ of size smaller than $ (1+\varepsilon ^{\nicefrac{1}{3}})cn$ whose value is at least $ (1-O(\varepsilon ^{\nicefrac{1}{3}}))\cdot r(\alpha) \cdot \text{OPT}$.
If $ |X| > cn$, then we uniformly at random choose a subset of size $ cn$ of $X$.
This incurs an additional multiplicative loss of $ (1-O(\varepsilon^{\nicefrac{1}{3}}))$ in the value of the solution since each element remains covered with a probability of at least $ cn/|X|\geq 1-O(\varepsilon^{\nicefrac{1}{3}})$.
Hence, the value of the output is at least $ (1-O(\varepsilon^{\nicefrac{1}{3}}))^2 \cdot r(\alpha)\cdot \text{OPT}\geq (1-O(\varepsilon^{\nicefrac{1}{3}})) \cdot r(\alpha)\cdot \text{OPT}$.

Choosing $\varepsilon >0$ sufficiently small allows us to obtain an approximation guarantee that is as close to $ r(\alpha)$ as desired.
Denote $\rho _{\text{SDP}}(c)\triangleq \max _{0\leq \alpha \leq 1}\{ r(\alpha)\}$.

\paragraph{Estimating $\rho _{\text{SDP}}(c)$.}
Lower bounding $ r(\alpha,2)$ \eqref{SDP-Alg-2} is not straightforward.
To this end we state a generalization of the \emph{simplicity conjecture} proposed by Austrin \cite{A07}.
\begin{conjecture}
\label{conj:simplicity}
For every $ 0<c<1$ and $ 0\leq \alpha \leq 1$, the minimizer of $ r(\alpha,2)$ is achieved for a configuration $ (\tilde{\rho},\rho,\mu_1,\mu_2) \in \mathcal{F}$ satisfying: $ \mu_1=\mu_2=\mu$ and $ \rho = \max \{ 0,2\mu -1\}$.
\end{conjecture}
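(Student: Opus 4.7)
The plan is to analyze the optimization problem defining $r(\alpha,2)$ by first reducing to the symmetric setting $\mu_1=\mu_2$, and then showing that the minimum in $\rho$ is attained at the extremal point $\rho=\max\{0,2\mu-1\}$. Throughout, write $\eta_i = 1-\alpha c - (1-\alpha)\mu_i$, so that the ratio to minimize is $N(\tilde\rho,\mu_1,\mu_2,\rho)/D(\mu_1,\mu_2,\rho)$ with $N = 1-\Phi_2(\tilde\rho,\eta_1,\eta_2)$ and $D = \mu_1+\mu_2-\rho$, subject to $\tilde\rho = (\rho-\mu_1\mu_2)/\sqrt{\mu_1(1-\mu_1)\mu_2(1-\mu_2)}$.

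\textbf{Step 1 (symmetrization).} I would parametrize $\mu_1 = s+t$, $\mu_2 = s-t$ while fixing $s$ and $\rho$, and show that the ratio, as a function of $t$, attains its minimum at $t=0$. Since $D$ does not depend on $t$, this reduces to showing that $\Phi_2(\tilde\rho(t),\eta_1(t),\eta_2(t))$ is maximized at $t=0$. Here both $\tilde\rho(t)$ and the thresholds $\eta_i(t)$ are smooth even/odd functions of $t$; a second-derivative computation at $t=0$, combined with the symmetry $\Phi_2(\tilde\rho,\eta_1,\eta_2)=\Phi_2(\tilde\rho,\eta_2,\eta_1)$, should show the Hessian is negative semidefinite in the $t$-direction. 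An alternative is to apply a rearrangement / averaging argument: exchanging the roles of the two sets and averaging the two configurations yields a symmetric configuration with objective value no larger than the original.

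\textbf{Step 2 (extremality of $\rho$).} Assuming $\mu_1=\mu_2=\mu$, the problem becomes
\[
 r(\alpha,2) = \min_{\mu\in(0,1)}\,\min_{\rho\in[\max\{0,2\mu-1\},\mu]} \frac{1-\Phi_2(\tilde\rho,\eta,\eta)}{2\mu-\rho},
\]
with $\tilde\rho = (\rho-\mu^2)/(\mu(1-\mu))$ and $\eta = 1-\alpha c - (1-\alpha)\mu$. I would compute the derivative in $\rho$, using $\partial_{\tilde\rho}\Phi_2(\tilde\rho,\eta,\eta) = \phi_2(\Phi^{-1}(\eta),\Phi^{-1}(\eta);\tilde\rho)$ for the bivariate normal density $\phi_2$, and $d\tilde\rho/d\rho = 1/(\mu(1-\mu))$. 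The goal is to show that the resulting derivative has a single sign on the interior of $[\max\{0,2\mu-1\},\mu]$, so the minimum lies on the boundary. The comparison reduces to an inequality of the form
\[
 \frac{\phi_2(\Phi^{-1}(\eta),\Phi^{-1}(\eta);\tilde\rho)}{\mu(1-\mu)} \;\ge\; \frac{1-\Phi_2(\tilde\rho,\eta,\eta)}{2\mu-\rho},
\]
which I would attempt to verify by bounding the right-hand side via a Mills-ratio / tail estimate for the bivariate normal and matching it to the density on the left, uniformly in $(\mu,\alpha,c)$.

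\textbf{Step 3 (choice of endpoint).} Finally I would show that between the two endpoints $\rho=\mu$ and $\rho=\max\{0,2\mu-1\}$, the latter gives the smaller ratio. At $\rho=\mu$ one has $\tilde\rho=1$ and the problem collapses to the single-set case already analyzed for $r(\alpha,1)$, while at $\rho=\max\{0,2\mu-1\}$ the correlation $\tilde\rho$ is as negative as possible. A direct plug-in comparison, using the closed form of $\Phi_2$ on the diagonal $\tilde\rho=1$ (where it reduces to $\Phi$), would complete the argument.

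The main obstacle is clearly Step 2: the sign of the derivative in $\rho$ is an inequality between the bivariate Gaussian density and its tail probability at a pair of identical thresholds that depend on $(\mu,\alpha,c)$, and such inequalities are notoriously delicate. This is the same kind of difficulty that kept Austrin's original simplicity conjecture open for over a decade until \cite{BHZ24}, and I expect that either (a) a reduction to the symmetric case $\alpha=0$ or to a one-parameter family amenable to computer-assisted verification, or (b) a transport-of-symmetry argument leveraging the proof of \cite{BHZ24} as a black box, is the most promising route.
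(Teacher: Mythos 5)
There is a genuine gap --- in fact, the statement you are trying to prove is not proved in the paper at all: it is stated as \Cref{conj:simplicity}, an open conjecture generalizing Austrin's simplicity conjecture, and the paper only records that the single instance $c=\nicefrac12$ with the one specific optimal $\alpha$ is known, via the heavily computer-assisted work of Brakensiek, Huang and Zwick \cite{BHZ24}. Your submission is a plan rather than a proof, and its two decisive steps are exactly the parts that remain open. In Step 1 the symmetrization is asserted, not argued: at fixed $s$ and $\rho$ the correlation $\tilde\rho(t)=(\rho-(s^2-t^2))/\sqrt{(s+t)(1-s-t)(s-t)(1-s+t)}$ is a nontrivial even function of $t$, and $\Phi_2$ is not jointly concave in (correlation, thresholds), so neither the ``Hessian at $t=0$'' claim nor the exchange-and-average argument goes through without substantial work; averaging two feasible configurations does not even stay inside $\mathcal{F}$, because $\tilde\rho$ depends nonlinearly on $(\rho,\mu_1,\mu_2)$.

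Step 2 is the real core, and as written it would fail to close. By Slepian's inequality $\Phi_2(\tilde\rho,\eta,\eta)$ is increasing in $\tilde\rho$, so as $\rho$ increases \emph{both} the numerator $1-\Phi_2$ and the denominator $2\mu-\rho$ decrease; deciding the sign of the derivative of their ratio is precisely a quantitative comparison between the bivariate Gaussian density $\phi_2(\Phi^{-1}(\eta),\Phi^{-1}(\eta);\tilde\rho)/(\mu(1-\mu))$ and the tail $\bigl(1-\Phi_2(\tilde\rho,\eta,\eta)\bigr)/(2\mu-\rho)$, uniformly over all $(\mu,\alpha,c)$. No standard Mills-ratio bound is sharp enough for this; establishing such an inequality (or the boundary-attainment it would imply) is essentially equivalent to the conjecture itself, which is why even the single point resolved in \cite{BHZ24} required an interval-arithmetic, computer-assisted analysis. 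Your own closing paragraph concedes this, so the honest conclusion is that the proposal identifies the right reduction (symmetrize, then push $\rho$ to the extreme point $\max\{0,2\mu-1\}$, matching the conjectured minimizer) but supplies no proof of either reduction step; the statement remains a conjecture.
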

It is worth mentioning that \Cref{conj:simplicity} was proved correct by Brakensiek, Huang, and Zwick \cite{BHZ24} for $ c=\nicefrac{1}{2}$ and the particular value of $ \alpha$ that is optimal for this value of $c$.

Assuming \Cref{conj:simplicity}, we can numerically compute $ \rho _{\text{SDP}}(c)$ for every $ \nicefrac{1}{2}<c<1$.
Using this value one can repeat a similar analysis to the one appearing in \Cref{sec:separating}.
This yields the approximation one can obtain for MC when $ \nicefrac{1}{2}<c<1$ (as seen in \Cref{fig:curves}).

\section{Open Questions}\label{sec:open-questions}

\paragraph{What is the optimal approximation ratio for SM?} \Cref{thrm:contgreedy-alg} gives an approximation ratio of $1-(1-c)^{\nicefrac1c}$, which is matched by \Cref{thrm:SM-hardness} when $\nicefrac1c \in \mathbb{N}$. We conjecture that $1-(1-c)^{\nicefrac1c}$ is the optimal approximation ratio for all $c$.

\paragraph{What is the optimal approximation ratio for MC?} We conjecture that the SDP-based algorithm for Max $k$-Vertex-Cover, used to prove \Cref{thrm:separating}, can be extended to give a $1-\nicefrac1e$ approximation for MC for arbitrary $k$, as well as to yield the optimal approximation ratio for $k=cn$ for all $c$.

\bibliographystyle{alpha}
\bibliography{mybibliography}

\end{document}